\documentclass[12pt,onecolumn,journal]{IEEEtran}
\usepackage{latexsym}
\usepackage[margin=2.54cm, lmargin=2.54cm]{geometry}
\usepackage{amsfonts}
\usepackage{amsbsy}
\usepackage{amsmath,amssymb}
\usepackage{amsthm}
\usepackage{times}
\usepackage{graphicx}
\usepackage{enumerate}
\usepackage[usenames]{color}
\usepackage[dvips]{pstcol}
\usepackage{epstopdf}
\usepackage{cite}
\usepackage{amsmath}
\usepackage{amssymb}
\usepackage{amsfonts}
\usepackage{graphicx}
\usepackage{epsfig}
\usepackage{psfrag}
\usepackage{xcolor}
\usepackage{amsfonts, bm}
\usepackage{epstopdf}
\usepackage{cite}
\usepackage{color}
\usepackage{xcolor}
\usepackage{verbatim}
\usepackage{subfig}
\usepackage{algorithm}

\newtheorem{theorem}{Theorem}

\input epsf
\linespread{1.46}

\begin{document}
\title{Intelligent Reflecting Surface Aided Full-Duplex Communication: Passive Beamforming and Deployment Design }


\author{Yunlong Cai, \IEEEmembership{Senior Member,~IEEE,} Ming-Min Zhao, \IEEEmembership{Member,~IEEE,}  Kaidi Xu,  and Rui Zhang, \IEEEmembership{Fellow,~IEEE}
\thanks{
Y. Cai,  M. Zhao, and K. Xu   are with the College of Information Science and Electronic Engineering, Zhejiang University, China (e-mail: ylcai@zju.edu.cn; zmmblack@zju.edu.cn; xukaidi13@126.com). R. Zhang is with the Department of Electrical and Computer Engineering, National University of Singapore, Singapore (e-mail: elezhang@nus.edu.sg).
}
}

\maketitle
\vspace{-2.8em}
\begin{abstract}
 This paper investigates the passive beamforming and deployment design for an intelligent reflecting surface (IRS) aided full-duplex (FD) wireless system, where an FD access point (AP) communicates with an uplink (UL) user and a downlink (DL) user simultaneously over the same time-frequency dimension with the help of  IRS. Under this setup, we consider three deployment cases: 1) two distributed IRSs placed near the UL user and DL user, respectively; 2) one centralized IRS placed near the DL user;  3) one centralized IRS placed near the UL user. In each case, we aim to minimize the weighted sum  transmit power consumption of the AP and UL user by jointly optimizing their transmit power  and the passive reflection coefficients at the IRS (or IRSs), subject to the UL and DL users' rate constraints and the uni-modulus constraints on the IRS reflection coefficients.
First, we analyze the minimum transmit power required  in the IRS-aided FD  system under each deployment scheme, and compare it with that of the corresponding half-duplex (HD)  system. We show that the FD system outperforms its HD counterpart for all IRS deployment schemes, while the distributed  deployment further outperforms the other two centralized deployment schemes. Next, we  transform the challenging power minimization problem into an equivalent but more tractable form and propose an efficient algorithm to solve it based on the block coordinate descent (BCD) method. Finally, numerical results are presented to validate our analysis as well as the efficacy  of the proposed passive beamforming design. 
\end{abstract}
\begin{IEEEkeywords}
Intelligent reflecting surface,  full-duplex, passive beamforming, deployment, power  minimization.
\end{IEEEkeywords}

\IEEEpeerreviewmaketitle

\section{Introduction}
\label{sec:intro}

Recently, intelligent reflecting surface (IRS) and its various equivalents have emerged as a promising technology to enhance the spectral efficiency of wireless communication systems with low hardware cost and energy consumption \cite{Liaskos2018mag, Renzo2019, Basar2019, WuTutorial2020, Wu2019Magazine}. Specifically, IRS is generally equipped with a planar surface composed of a large number of passive reflecting elements, each of which can induce an independent phase shift and/or amplitude change of the incident signal in  real time. Based on the  channel state information (CSI) and with the aid of a smart controller, IRS is able to modify/reconfigure the signal propagation by dynamically adjusting its reflection coefficients such that the desired  and interfering signals can be added constructively and destructively at the receivers, respectively,  to boost the desired signal power and/or suppress the co-channel interference (CCI), thus achieving communication performance improvement. Additionally, since such passive reflecting elements do not require any active transmit radio frequency (RF) chains (which constitute  high-cost amplifiers, filters, mixers, attenuators and detectors, etc.), their energy and hardware costs are much lower than those of the active components in traditional base stations (BSs), access points (APs), and relays. As a result, IRSs can be flexibly deployed in wireless networks and seamlessly  integrated into the existing cellular or WiFi systems,  with controllable interference to each other as they usually have much smaller signal coverage than active BSs/APs/relays. Due to the above advantages,  IRS  has been extensively  studied under various setups (see, e.g., \cite{QwuTWC2019, Huang2019, zhao2019intelligent, Jiang2019, Yang2019, zuo2020resource}), where the effectiveness of IRS in enhancing these systems' performance was demonstrated.

To boost the spectral efficiency of wireless  systems, the full-duplex (FD) communication  is another  promising technique \cite{Sabharwal2014, Zheng2015, Riihonen2011}. Compared with the traditional half-duplex (HD) system, it  better utilizes the spectrum by enabling signal transmission and reception over the same time-frequency dimension and thus can double  the spectral efficiency theoretically. Although the FD system  offers promising  spectral efficiency gains, it also brings   challenges in practice, such as the self-interference (SI) and CCI caused by the simultaneous downlink (DL) and uplink (UL) transmissions. These interferences, if left unattended, can encroach the gain offered by FD and even  degrade the system performance as compared to traditional HD. Fortunately, many efficient SI cancellation techniques have been proposed in the literature \cite{Riihonen2011, Everett2014, Zhang2015Mag}, such as passive suppression, analog and digital cancellations, etc., and it has been reported that the SI can be suppressed up to the background noise floor at an FD node \cite{Dinesh2013}, which leads to implementable FD systems. However, the CCI inherited from the FD operation can also  cause significant spectral efficiency degradation, despite the fact that the SI can be effectively suppressed. In practice, the interference from the UL users to the DL users (UL-to-DL interference) can be  detrimental if the UL and DL users are located close to each other. To avoid  the performance degradation due to CCI, various methods such as DL/UL user pairing and transmission scheduling  have been proposed in the literature (see e.g., \cite{Xin2015, Park2019CCI} and the references therein).

Motivated by the above, in this paper, we combine the two  spectral-efficient wireless techniques, i.e., IRS and FD into a new system, which can leverage  IRS's  interference cancellation and signal enhancement capabilities to further improve the FD communication  performance. 

\subsection{Prior Works}
IRS has been studied recently in various wireless systems, such as  IRS-aided orthogonal frequency division multiplexing (OFDM) \cite{Yang2019},  mmWave communication \cite{Zuo2020CL, Wang2020TVT}, physical layer security \cite{Cui2019, Hong2020TCOM}, wireless power transfer \cite{wu2019joint, PanJSAC2020}, etc. To the best of our knowledge, there are only few works that have
studied IRS-aided FD systems \cite{Zhang2020CL, Shen2020full-duplex, Xu2020full-duplex}. Specifically, in \cite{Zhang2020CL}, the authors studied an IRS-aided FD multiple-input multiple-output (MIMO) two-way communication system and the sum-rate maximization problem was considered by jointly optimizing the active precoders at the sources and the IRS reflection coefficients through the Arimoto-Blahut algorithm. Under the same system model, the authors in \cite{Shen2020full-duplex} further proposed an alternating optimization algorithm to solve the sum-rate maximization problem with lower computational complexity. In \cite{Xu2020full-duplex}, an IRS-assisted FD cognitive radio system was investigated, where the IRS is employed to enhance the performance of the secondary network and in the meantime mitigate the interference caused to the primary users. Note that these existing works on IRS-aided FD systems mainly focused on joint active and passive beamforming design, while
the fundamental advantages of IRS-aided FD versus HD system are not fully characterized yet. 
Moreover, with a given number of IRS reflecting elements, there are various IRS deployment strategies for an IRS-aided FD system, e.g., 
placing multiple IRSs in the network for far-apart  users separately, or forming them as one large IRS and placing it near a cluster of nearby  users, which are referred to as distributed and centralized deployment, respectively. 
The IRS deployment problem has been investigated in the point-to-point channel  \cite{Han2020WCL} and the UL (DL) multiple access (broadcast) channel  \cite{Zhang2020deploy}; while  it is yet unclear which of the above two IRS deployment strategies (centralized or distributed) achieves better performance in IRS-aided FD systems.

\subsection{Main Contributions}
In this paper, we consider an IRS-aided FD system where an FD AP (equipped with one transmit antenna and one receive antenna) communicates with a single-antenna UL user and a single-antenna DL user simultaneously over the same time-frequency dimension  with the help of a given number of  IRS reflecting elements. For simplicity, we assume that the IRS (or IRSs if deployed in a distributed manner) is placed in close vicinity to the users to minimize the path loss. Under this setup, we investigate three deployment cases: 1) two distributed IRSs placed near the UL user and DL user, respectively; 2) one centralized IRS placed near the DL user; and 3) one centralized IRS placed near the UL user. In each case, we aim to minimize the weighted sum transmit power consumption of the AP and UL user by jointly optimizing their  transmit power  and the passive reflection coefficients at the IRS (or IRSs), subject to the UL and DL users' rate constraints and the uni-modulus constraints on the IRS reflection coefficients. To focus on the deployment strategy design and passive beamforming optimization, we assume for simplicity the availability of  CSI for all  channels involved, which can be obtained by various existing
channel estimation methods (e.g., \cite{zheng2019intelligent, Yang2019, He2019_CE}). The main contributions of this paper in view of the
existing literature are summarized as follows.

\begin{itemize}
\item First, by assuming Rayleigh fading channels and with an asymptotically large number of IRS reflecting elements, we analyze the minimum  transmit power required  in the above-mentioned three IRS deployment cases. In particular, we show that with the aid of IRS, the FD system always outperforms the HD system regardless of the UL-to-DL interference. This is in sharp contrast with the conventional system without IRS, where the FD operation is not always beneficial, especially in the case that the UL-to-DL interference is severe. Besides, we show  that  the minimum power consumption  with the distributed deployment (Case 1) is much lower than that  with the centralized deployment (Case 2 or 3) in the IRS-aided FD system, which unveils  that the distributed  deployment generally outperforms  centralized deployment  in terms of  power consumption.

\item Second, for arbitrary  channels  and finite  number of reflecting elements, we propose an efficient passive beamforming design algorithm for solving the formulated power minimization problem, which is difficult to solve due to the  non-convex  objective function and uni-modulus constraints. Specifically, we first transform the original problem into an equivalent but more tractable form by introducing an auxiliary variable, and then propose an algorithm for solving it  by employing the block coordinate descent (BCD) method. The computational complexity and convergence property of the proposed algorithm are analyzed.

\item Finally, numerical results are presented to validate our analysis and show  the performance of the proposed algorithm. Particularly, we draw useful insights into the impact of the distance between the UL and DL users, the number of reflecting elements and the UL/DL target rates on the total transmit power. Besides, performance comparison between the proposed algorithm and a low-complexity heuristic algorithm is provided.
\end{itemize}

\subsection{Organization}

The rest of this  paper is organized  as follows. Section~\ref{Section2:system} describes the system model and formulates the optimization
problem of interest. In Section~\ref{Section3:analysis},  we  analyze  the minimum power consumption for FD and HD systems under different IRS deployment cases. In Section~\ref{Section4:passivebeamformingalgorithm}, we propose a new algorithm to optimize the passive beamforming for solving the formulated problems. The simulation results are presented in Section~\ref{Section5:simulations} and our conclusions are drawn in Section~\ref{Section6:conclusion}.

\emph{Notations:} Scalars, vectors and matrices are respectively
denoted by lower case, boldface lower case and boldface upper case
letters.  $\mathbf{I}$ represents an identity matrix and $\mathbf{0}$
denotes an all-zero matrix.  
 For a matrix $\mathbf{A}$,
${{\bf{A}}^T}$, $\text{conj}(\mathbf{A})$, ${{\bf{A}}^H}$ and $\|\mathbf{A}\|$
denote its transpose, conjugate, conjugate transpose and Frobenius
norm, respectively.
 $\text{diag}(\mathbf{A})$ denotes a vector whose elements are the corresponding ones on the main diagonal of $\mathbf{A}$.
For a vector $\mathbf{a}$, $\text{Diag}(\mathbf{a})$ denotes a diagonal matrix with each diagonal element being the corresponding element in $\mathbf{a}$.
$\mathbf{1}$ denotes an all-ones vector. 
$\Re\{\cdot \}$ ($\Im\{\cdot\}$) denotes the real (imaginary) part of a
variable, while $| \cdot |$ represents the absolute value of a complex
scalar.  ${\mathbb{C}^{m \times n}}\;({\mathbb{R}^{m \times n}})$
denotes the space of ${m \times n}$ complex (real) matrices.  The letter $j$ is used to represent $\sqrt{-1}$ when there is no ambiguity.
$\mathbb{Z}^+$ denotes a set of positive integers.
The operator $\angle$ takes the phase angles of the elements in a
matrix. $\mathbf{a}\sim \mathcal{CN}(\mathbf{0}, \mathbf{A})$ denotes that  the random vector   $\mathbf{a}$
follows the circularly-symmetric  complex Gaussian distribution with zero mean and covariance matrix $\mathbf{A}$.

\section{System model and problem formulation}
\label{Section2:system}

In this section, we introduce the system model and  problem formulation for three different IRS deployment cases under investigation.
\subsection{FD System} 
Consider an FD system consisting of an AP,  a UL user and a DL user. The AP is equipped with a single transmit antenna and a receive antenna, and operates in the FD mode. 
The UL and DL users are both equipped with a single
 antenna, and operate in the HD mode.\footnote{In order to focus on our study, we consider a typical single-antenna two-user FD system; while the results in this paper can be extended to the general setup of  IRS-aided  multi-antenna and/or multi-user FD systems, which are left for future work.}

  Generally, IRS is deployed  near the  users for enhancing their performance.
 Assuming a total of $N\in\mathbb{Z}^+$ IRS reflecting elements utilized  in the  system, 
for this case  depicted in Fig.~\ref{fig:structure1} (a),
 there are two IRSs, namely    IRS 1 and IRS 2, which are   equipped with $\rho N\in\mathbb{Z}^+$ and $(1-\rho) N\in\mathbb{Z}^+$  reflecting elements and deployed near the UL/DL user, respectively, where  $0<\rho<1$ denotes a preset ratio.
 Since  IRS 1 and IRS 2  are   sufficiently far  from the DL user and the UL user, respectively, the link that IRS 1 reflects the signals from the AP to the DL user
 as well as  that IRS 2 reflects the signals from the UL user  to the AP can be neglected, as compared to the other links shown in the figure.
The received signal  at the AP  is expressed as
 \begin{equation}
 y_{A}=h_{AU} \sqrt{p_{U}} s_{U}+\mathbf{f}^H_{AI}\mathbf{\Theta}_{U}\mathbf{f}_{IU}\sqrt{p_{U}} s_{U}+n_{A},
 \end{equation}
where $s_{U}$ denotes the transmit signal of the UL user, $p_{U}>0$ denotes the transmit power  of the UL user, and
$n_{A}\in\mathbb{C}$ denotes the independent and identically distributed (i.i.d.)  complex  Gaussian noise  at the AP with zero mean and variance $\sigma^2_{A}$.
$\mathbf{f}_{IU}\in\mathbb{C}^{\rho N\times 1}$ and $\mathbf{f}_{AI}\in\mathbb{C}^{\rho N\times 1}$ denote respectively  the channel vector between the UL user and  IRS 1, and that  between  IRS 1 and the AP.
$h_{AU}\in\mathbb{C}$ denotes the channel coefficient between the UL user and the AP.
$\mathbf{\Theta}_{U}\in\mathbb{C}^{\rho N\times \rho N}$ denotes the passive beamforming matrix at  IRS 1 placed near the UL user, which is a diagonal matrix  due to no signal coupling/joint processing over its passive reflecting elements.
Since the CSI of the SI link can be obtained at the AP, based on certain interference cancellation techniques \cite{Haija2016TCOM, Cai2020FD}, we assume that the  SI at the AP can be eliminated completely for the sake of exposition.

On the other hand, the  received signal at the DL user  is given by
 \begin{equation}
 y_{D}=h_{DA}\sqrt{p_{A}}s_{D}+\mathbf{g}^H_{DI}\mathbf{\Theta}_{D}\mathbf{g}_{IA}\sqrt{p_{A}}s_{D}+\underbrace{g\sqrt{p_{U}}s_{U}+\mathbf{f}^H_{DI}\mathbf{\Theta}_{U}\mathbf{f}_{IU} \sqrt{p_{U}}s_{U}+\mathbf{g}^H_{DI}\mathbf{\Theta}_{D}\mathbf{g}_{IU} \sqrt{p_{U}}s_{U}}_{\textrm{interference from the UL user}}+n_{D},
 \end{equation}
 where    $s_{D}$ and  $p_{A}>0$  denote the transmit signal for the DL user and    the transmit power at the AP, respectively.
  $\mathbf{g}_{IA}\in\mathbb{C}^{(1-\rho)N\times 1}$,  $\mathbf{g}_{DI}\in\mathbb{C}^{(1-\rho)N\times 1}$,  $\mathbf{f}_{DI}\in\mathbb{C}^{\rho N\times 1}$ and $\mathbf{g}_{IU}\in\mathbb{C}^{(1-\rho)N\times 1}$ denote the channel vectors between the AP and  IRS 2,  between  IRS 2 and the DL user,   between  IRS 1 and the DL user, and  between the UL user and IRS 2, respectively.
  $h_{DA}\in\mathbb{C}$ denotes the channel coefficient between the AP and the DL user. $\mathbf{\Theta}_{D}\in\mathbb{C}^{(1-\rho)N\times (1-\rho)N}$ denotes the diagonal beamforming matrix at  IRS 2 which is placed near the DL user.
  $g$ denotes the channel coefficient between the UL user and the DL user. $n_{D}\in\mathbb{C}$ denotes the i.i.d. complex  Gaussian noise  at the DL user with zero mean and variance $\sigma^2_{D}$. The link between the two IRSs is also  neglected due to their large distance and thus  the high path loss.

\begin{figure*}[!t]
\centering
\scalebox{0.46}{\includegraphics{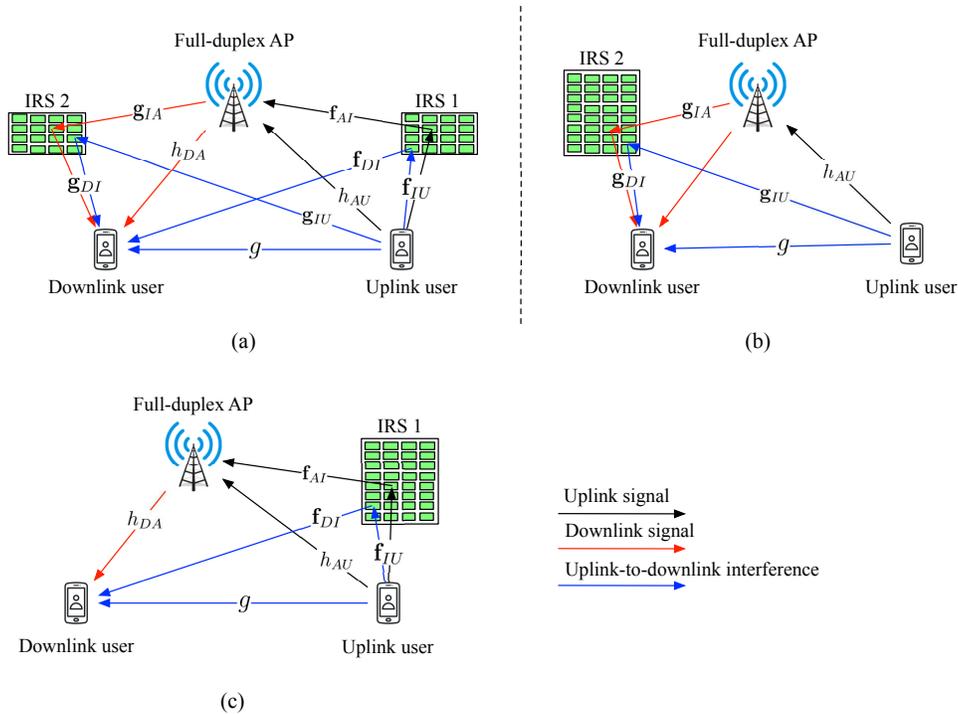}}
\caption{ (a) Case 1:   IRSs placed near both users; (b) Case 2: IRS placed  near the downlink user only; (c) Case 3: IRS  placed near the uplink user only.  }\label{fig:structure1}
\end{figure*}

Let us define the effective channel power gains, namely,   $\lambda_D(\boldsymbol{\theta}_D)\triangleq |h_{DA}+\mathbf{g}^H_{DA}\boldsymbol{\theta}_{D}|^2$, where $\mathbf{g}_{DA}\triangleq \textrm{Diag}(\textrm{conj}(\mathbf{g}_{IA}))\mathbf{g}_{DI}$ and $\boldsymbol{\theta}_{D} \triangleq\textrm{diag}(\boldsymbol{\Theta}_{D})$,
$\lambda_{U}(\boldsymbol{\theta}_{U})\triangleq |h_{AU}+\mathbf{f}^H_{AU}\boldsymbol{\theta}_{U}|^2$, where $\mathbf{f}_{AU}\triangleq \textrm{Diag}(\textrm{conj}(\mathbf{f}_{IU}))\mathbf{f}_{AI}$ and $\boldsymbol{\theta}_{U} \triangleq\textrm{diag}(\boldsymbol{\Theta}_{U})$,
and $\lambda_{DU}(\boldsymbol{\theta}_{U}, \boldsymbol{\theta}_{D})\triangleq |g+\mathbf{d}^H\boldsymbol{\theta}_{DU}|^2$, where $\boldsymbol{\theta}_{DU}\triangleq[\boldsymbol{\theta}^H_{U}, \boldsymbol{\theta}^H_{D}]^H$, $\mathbf{d}\triangleq [\mathbf{f}^H_{DU}, \mathbf{g}^H_{DU}]^H$, $\mathbf{f}_{DU}\triangleq\textrm{Diag}(\textrm{conj}(\mathbf{f}_{IU}))\mathbf{f}_{DI}$, and $\mathbf{g}_{DU}\triangleq\textrm{Diag}(\textrm{conj}(\mathbf{g}_{IU}))\mathbf{g}_{DI}$. We aim to minimize the   weighted sum transmit power consumption of the UL user and AP by jointly optimizing their corresponding  transmit power levels $p_U$ and $p_A$, and the  passive beamforming vectors $\boldsymbol{\theta}_{U}$ and $\boldsymbol{\theta}_{D}$ at the two  IRSs, subject to the UL and DL users' rate constraints and the  uni-modulus constraints on the elements of the IRS passive beamforming vector. Accordingly, the  optimization problem is formulated as
 \begin{subequations} \label{eq:case3}
 \begin{align}
 \min_{p_{U}, p_{A}, \boldsymbol{\theta}_{U}, \boldsymbol{\theta}_{D} }\quad & \kappa_{U} p_{U}+ \kappa_{A} p_{A}  \label{objective22}\\
 \mbox{s.t.}\quad 
 & \log\left(1+\frac{p_{U}\lambda_{U}(\boldsymbol{\theta}_{U})}{\sigma^2_{A}}\right)\geq \gamma_{A},\label{case3:b}\\
 & \log\left(1+\frac{p_{A}\lambda_{D}(\boldsymbol{\theta}_{D})}{p_{U} \lambda_{DU}(\boldsymbol{\theta}_{U}, \boldsymbol{\theta}_{D})+\sigma^2_{D}}\right)\geq \gamma_{D}, \label{case3:c}\\ 
  & |\boldsymbol{\theta}_{U}(n_1)|=1, \, |\boldsymbol{\theta}_{D}(n_2)|=1, \, \forall  n_1\in\{1,\ldots, \rho N\},\, \forall n_2\in\{1,\ldots, (1-\rho)N\}, \label{constantmodulus2223}
 \end{align}
 \end{subequations}
where we have assumed base-2 for the logarithm function,  $\kappa_{U}>0$ and $\kappa_{A}>0$ represent the power  weights corresponding to the UL user and AP, respectively.   Constraints \eqref{case3:b}  and  \eqref{case3:c}  correspond to  the rate requirements for the UL and DL transmissions, respectively, where $\gamma_{A}$ and $\gamma_{D}$ denote the target rates in bits per second per Hertz (bits/s/Hz).
   \eqref{constantmodulus2223} denotes the uni-modulus constraints on all  elements of the IRS passive beamforming vector.

Next, we consider two special cases of the general IRS deployment introduced in the above. First, for the case of placing all reflecting elements to  IRS 2 near the DL user  (i.e., with $\rho=0$) as shown in Fig.~\ref{fig:structure1} (b), let $\boldsymbol{\theta}_{DU} \in\mathbb{C}^{N\times 1}$ denote the entire passive beamforming vector at  IRS 2. By defining
$\tilde{\mathbf{g}}_{IA}\in\mathbb{C}^{N\times 1}$,  $\tilde{\mathbf{g}}_{DI}\in\mathbb{C}^{N\times 1}$ and $\tilde{\mathbf{g}}_{IU}\in\mathbb{C}^{N\times 1}$ as the channel vectors between the AP and  IRS 2,  between  IRS 2 and the DL user, and  between the UL user and  IRS 2, respectively,
 the optimization problem in this case can be   formulated  by simplifying  \eqref{eq:case3} as
 \begin{subequations} \label{eq:case1}
 \begin{align}
 \min_{p_{U}, p_{A}, \boldsymbol{\theta}_{DU} }\quad & \kappa_{U} p_{U}+ \kappa_{A} p_{A}  \label{objective22}\\
 \mbox{s.t.}\quad 
 & \log\left(1+\frac{p_{U}|h_{AU}|^2}{\sigma^2_{A}}\right)\geq \gamma_{A},\label{transmitpower222}\\
 & \log\left(1+\frac{p_{A}\omega_{D}(\boldsymbol{\theta}_{DU})}{p_{U} \omega_{DU}(\boldsymbol{\theta}_{DU})+\sigma^2_{D}}\right)\geq \gamma_{D}, \label{transmitpower222233}\\ 
  & |\boldsymbol{\theta}_{DU}(n)|=1, \forall  n\in\{1,\ldots, N\}. \label{constantmodulus22211}
 \end{align}
 \end{subequations}
 Here we have $\omega_{D}(\boldsymbol{\theta}_{DU})\triangleq |h_{DA}+\tilde{\mathbf{g}}^H_{DA}\boldsymbol{\theta}_{DU}|^2$, where $\tilde{\mathbf{g}}_{DA}\triangleq \textrm{Diag}(\textrm{conj}(\tilde{\mathbf{g}}_{IA}))\tilde{\mathbf{g}}_{DI}$, 
 and $\omega_{DU}(\boldsymbol{\theta}_{DU})\triangleq |g+\tilde{\mathbf{g}}^H_{DU}\boldsymbol{\theta}_{DU}|^2$, where $\tilde{\mathbf{g}}_{DU}\triangleq \textrm{Diag}(\textrm{conj}(\tilde{\mathbf{g}}_{IU}))\tilde{\mathbf{g}}_{DI}$.

Second, for  the other  case of placing all reflecting elements to IRS 1  near the UL user (i.e., with $\rho=1$) as  depicted in Fig.~\ref{fig:structure1} (c), let us define   $\tilde{\mathbf{f}}_{IU}\in\mathbb{C}^{N\times 1}$ and $\tilde{\mathbf{f}}_{AI}\in\mathbb{C}^{N\times 1}$ as the channel vectors between the UL user and  IRS 1  and  between  IRS 1 and the AP, respectively, and $\tilde{\mathbf{f}}_{DI}\in\mathbb{C}^{N\times 1}$  as the channel vector between  IRS 1 and the DL user. Similarly, the optimization problem in this case can be formulated as
 \begin{subequations} \label{eq:case2}
 \begin{align}
 \min_{p_{U}, p_{A}, \boldsymbol{\theta}_{DU} }\quad & \kappa_{U} p_{U}+\kappa_{A} p_{A}  \label{objective22}\\
 \mbox{s.t.}\quad
 & \log\left(1+\frac{p_{U}\xi_{U}(\boldsymbol{\theta}_{DU})}{\sigma^2_{A}}\right)\geq \gamma_{A},\label{transmitpower2222}\\
 & \log\left(1+\frac{p_{A} |h_{DA}|^2}{p_{U} \xi_{DU}(\boldsymbol{\theta}_{DU})+\sigma^2_{D}}\right)\geq \gamma_{D}, \label{transmitpower22222}\\
  & |\boldsymbol{\theta}_{DU}(n)|=1, \forall  n\in\{1,\ldots, N\}. \label{constantmodulus222}
 \end{align}
 \end{subequations}
Here we define $\xi_{U}(\boldsymbol{\theta}_{DU})\triangleq |h_{AU}+\tilde{\mathbf{f}}^H_{AU}\boldsymbol{\theta}_{DU}|^2$, where $\tilde{\mathbf{f}}_{AU}\triangleq \textrm{Diag}(\textrm{conj}(\tilde{\mathbf{f}}_{IU}))\tilde{\mathbf{f}}_{AI}$, and $\xi_{DU}(\boldsymbol{\theta}_{DU})\triangleq |g+\tilde{\mathbf{f}}^H_{DU}\boldsymbol{\theta}_{DU}|^2$, where $\tilde{\mathbf{f}}_{DU}\triangleq \textrm{Diag}(\textrm{conj}(\tilde{\mathbf{f}}_{DI}))\tilde{\mathbf{f}}_{IU}$.

\subsection{HD System}
As for  the corresponding HD system, the received signal  vectors at the AP and DL  user for the three deployment schemes are similar to that of the FD system, but there is no  interference term from the UL user due to the orthogonal UL and DL transmissions. The detailed expressions of the received signals are thus  omitted  for brevity. As a result, the optimization problem of Case 1 in {the HD system} is expressed as
 \begin{subequations} \label{problemhd1}
 \begin{align}
 \min_{p_{U}, p_{A}, \boldsymbol{\theta}_{U}, \boldsymbol{\theta}_{D} }\quad & \kappa_{U} p_{U}+ \kappa_{A} p_{A}  \label{problemhd1objective22}\\
 \mbox{s.t.}\quad
 & \frac{1}{2}\log\left(1+\frac{p_{U}\lambda_{U}(\boldsymbol{\theta}_{U})}{\frac{1}{2}\sigma^2_{A}}\right)\geq \gamma_{A},\label{problemhd1case3:b}\\
 & \frac{1}{2}\log\left(1+\frac{p_{A}\lambda_{D}(\boldsymbol{\theta}_{D})}{\frac{1}{2}\sigma^2_{D}}\right)\geq \gamma_{D}, \label{problemhd1case3:c}\\
  & \eqref{constantmodulus2223}, \notag
 \end{align}
 \end{subequations}
where \eqref{problemhd1case3:b} and \eqref{problemhd1case3:c} denote
the user rate constraints, and  the factor $\frac{1}{2}$ is due to the fact that the uplink and downlink transmissions are allocated with either half of the bandwidth or half of the time as compared to the FD system. The optimization problems for Cases 2 and 3 in the HD system can be similarly formulated, thus they are omitted here.

In the following two  sections, we first analyze  the minimum power consumption required  in different IRS deployment  cases with given IRS passive beamforming vectors. Then, we propose an efficient algorithm to optimize the passive beamforming and thereby solve the above problems.

\section{Performance Analysis}
\label{Section3:analysis}
In this section, we  analyze the minimum transmit power required  in the IRS-aided FD system under each IRS deployment scheme, and compare it with that of the corresponding  HD system. We show that the FD system outperforms its HD counterpart for all IRS deployment schemes, while the distributed deployment further outperforms the other two centralized deployment schemes in the IRS-aided FD system.

\subsection{ Comparison  of FD versus  HD Systems}
\label{subsectionIVA}
In this subsection, we  compare the power consumption of IRS-aided FD system with its HD counterpart under the three considered IRS deployment cases.

\subsubsection{Case 1}
 First, consider the case   of placing  IRSs near both users as shown in Fig. \ref{fig:structure1} (a). For the FD system, since the user rates are monotonically increasing with $p_{U}$ and $p_{A}$,  their  optimal solutions in \eqref{eq:case3} should guarantee that the inequality constraints \eqref{case3:b} and \eqref{case3:c} are met with equality.
Thus, the optimal  transmit power $p_{U}^*$ and $p_{A}^*$ are given by
\begin{equation} \label{optimal_p_1}
p_{U}^*=\frac{(2^{\gamma_{A}}-1)\sigma^2_{A}}{\lambda_{U}(\boldsymbol{\theta}_{U})},
\end{equation}
\begin{equation} \label{optimal_p_2}
p_{A}^*=\frac{(2^{\gamma_{D}}-1)\sigma^2_{A}(2^{\gamma_{A}}-1)\lambda_{DU}(\boldsymbol{\theta}_{U}, \boldsymbol{\theta}_{D} )}{\lambda_{D}(\boldsymbol{\theta}_{D})\lambda_{U}(\boldsymbol{\theta}_{U})}+\frac{(2^{\gamma_{D}}-1)\sigma^2_{D}}{\lambda_{D}(\boldsymbol{\theta}_{D})}.
\end{equation}
By defining $\boldsymbol{\theta}^{*}_{U}$ and $\boldsymbol{\theta}^{*}_{D}$  as the optimal IRS passive beamforming/phase-shift vectors for Case 1, the minimum power consumption of the FD system can be expressed as
\begin{equation}
L_1(\boldsymbol{\theta}^*_{U}, \boldsymbol{\theta}^*_{D})\triangleq  \kappa_{U} p^*_{U}+  \kappa_{A}p^*_{A}= \frac{\bar{\gamma}_1\lambda_{DU}(\boldsymbol{\theta}^*_{U}, \boldsymbol{\theta}^*_{D} )}{\lambda_{D}(\boldsymbol{\theta}^*_{D})\lambda_{U}(\boldsymbol{\theta}^*_{U})}+\frac{\bar{\gamma}_2}{\lambda_{D}(\boldsymbol{\theta}^*_{D})}+\frac{\bar{\gamma}_3}{\lambda_{U}(\boldsymbol{\theta}^*_{U})},\label{eq:31}
\end{equation}
where  $\bar{\gamma}_1\triangleq \kappa_{A}(2^{\gamma_{D}}-1)\sigma^2_{A}(2^{\gamma_{A}}-1)$, $\bar{\gamma}_2\triangleq \kappa_{A}(2^{\gamma_{D}}-1)\sigma^2_{D}$ and $\bar{\gamma}_3\triangleq \kappa_{U}(2^{\gamma_{A}}-1)\sigma^2_{A}$.

By recalling problem \eqref{problemhd1}, similarly, the minimum power consumption of the HD system in Case 1 can be expressed as
\begin{equation}
\tilde{L}_1(\boldsymbol{\theta}^{**}_{U}, \boldsymbol{\theta}^{**}_{D})= \frac{\kappa_{U}(2^{2\gamma_{A}}-1)\sigma^2_{A}}{2\lambda_{U}(\boldsymbol{\theta}^{**}_{U})}+\frac{\kappa_{A}(2^{2\gamma_{D}}-1)\sigma^2_{D}}{2\lambda_{D}(\boldsymbol{\theta}^{**}_{D})},\label{eq:34}
\end{equation}
where $\boldsymbol{\theta}^{**}_{U}\triangleq \arg \max_{\boldsymbol{\theta}_{U}} \lambda_{U}(\boldsymbol{\theta}_{U})$ and $\boldsymbol{\theta}^{**}_{D} \triangleq \arg \max_{\boldsymbol{\theta}_{D}} \lambda_{D}(\boldsymbol{\theta}_{D})$  denote the
channel gain maximization (CGM) based beamforming  vectors at both IRSs, which are optimal  for the HD system.
$\lambda_{U}(\boldsymbol{\theta}^{**}_{U})\triangleq|h_{AU}+\mathbf{f}^H_{AU}\boldsymbol{\theta}^{**}_{U}|^2$ and $\lambda_{D}(\boldsymbol{\theta}^{**}_{D})\triangleq|h_{DA}+\mathbf{g}^H_{DA}\boldsymbol{\theta}^{**}_{D}|^2$ denote the maximum channel gains for the link between the UL user and the AP and that between the AP and the DL user, respectively. It is readily seen that
\begin{equation}
\boldsymbol{\theta}^{**}_{U}=\exp(j(\angle h_{AU} \mathbf{1}-\angle \mathbf{f}_{AU})), \,\, \boldsymbol{\theta}^{**}_{D}=\exp(j(\angle h_{DA} \mathbf{1}-\angle \mathbf{g}_{DA})).\label{eq:suboptimal1111}
\end{equation}
Based on \eqref{eq:31} and \eqref{eq:34}, we have the following theorem.
\begin{theorem}\label{theorem_1}
The minimum  power consumption of the FD system is lower
than that of the HD system in Case 1, i.e., $\tilde{L}_1(\boldsymbol{\theta}^{**}_{U}, \boldsymbol{\theta}^{**}_{D})\geq L_1(\boldsymbol{\theta}^{*}_{U}, \boldsymbol{\theta}^{*}_{D})$, if the following condition is satisfied:
\begin{equation}
\begin{split}
\frac{\kappa_{A}|g|^2}{N^2}\leq U(\boldsymbol{\theta}^{**}_{U}, \boldsymbol{\theta}^{**}_{D}) &\triangleq \frac{\kappa_{U}(|h_{DA}|^2+|\mathbf{g}^H_{DA}\boldsymbol{\theta}^{**}_{D}|^2-2|h_{DA}||\mathbf{g}^H_{DA}\boldsymbol{\theta}^{**}_{D}|)(2^{\gamma_{A}}-1)}{6(2^{\gamma_{D}}-1)N^2}
\\&\quad+\frac{\kappa_{A}(|h_{AU}|^2+|\mathbf{f}^H_{AU}\boldsymbol{\theta}^{**}_{U}|^2-2|h_{AU}||\mathbf{f}^H_{AU}\boldsymbol{\theta}^{**}_{U}|)(2^{\gamma_{D}}-1)\sigma^2_{D}}{6(2^{\gamma_{A}}-1)\sigma^2_{A} N^2}\\&\quad-\frac{\kappa_{A}(|\mathbf{f}^H_{DU}\boldsymbol{\theta}^{**}_{U}|^2+|\mathbf{g}^H_{DU}\boldsymbol{\theta}^{**}_{D}|^2)}{N^2}.\label{conditionanalysis2}
\end{split}
\end{equation}
Moreover, assuming Rayleigh fading channels for all IRS-related links, i.e., $\mathbf{f}_{IU} \sim \mathcal{CN}(\mathbf{0}, \varrho^2_{f_{IU}}\mathbf{I})$, $\mathbf{f}_{AI} \sim \mathcal{CN}(\mathbf{0}, \varrho^2_{f_{AI}}\mathbf{I})$, $\mathbf{g}_{IA} \sim \mathcal{CN}(\mathbf{0}, \varrho^2_{g_{IA}}\mathbf{I})$, $\mathbf{g}_{DI} \sim \mathcal{CN}(\mathbf{0}, \varrho^2_{g_{DI}}\mathbf{I})$, $\mathbf{f}_{DI} \sim \mathcal{CN}(\mathbf{0}, \varrho^2_{f_{DI}}\mathbf{I})$ and $\mathbf{g}_{IU} \sim \mathcal{CN}(\mathbf{0}, \varrho^2_{g_{IU}}\mathbf{I})$,
 it holds that when  $N$ becomes asymptotically large, 
 \begin{equation}
U(\boldsymbol{\theta}^{**}_{U}, \boldsymbol{\theta}^{**}_{D})\rightarrow
\frac{\kappa_{U}(2^{\gamma_{A}}-1)(1-\rho)^2\pi^2\varrho^2_{f_{IU}}\varrho^2_{f_{AI}}}{96(2^{\gamma_{D}}-1)}
+\frac{\bar{\gamma}_2\pi^2\varrho^2_{g_{IA}}\varrho^2_{g_{DI}}\rho^2}{96(2^{\gamma_{A}}-1)\sigma^2_{A}}>0. 
\label{condition22}
\end{equation}
\end{theorem}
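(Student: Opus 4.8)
The plan is to compare the two power levels at a \emph{common} reflection pattern. Because $(\boldsymbol{\theta}^{*}_{U},\boldsymbol{\theta}^{*}_{D})$ minimizes the right-hand side of \eqref{eq:31} over all unit-modulus vectors, we have $L_1(\boldsymbol{\theta}^{*}_{U},\boldsymbol{\theta}^{*}_{D})\le\bar L_1$, where $\bar L_1$ denotes the expression in \eqref{eq:31} evaluated at the channel-gain-maximizing (CGM) vectors $(\boldsymbol{\theta}^{**}_{U},\boldsymbol{\theta}^{**}_{D})$ of \eqref{eq:suboptimal1111}; hence it suffices to prove $\tilde L_1(\boldsymbol{\theta}^{**}_{U},\boldsymbol{\theta}^{**}_{D})\ge\bar L_1$, where now both sides share the same $\lambda_U(\boldsymbol{\theta}^{**}_{U})$, $\lambda_D(\boldsymbol{\theta}^{**}_{D})$ and $\lambda_{DU}(\boldsymbol{\theta}^{**}_{U},\boldsymbol{\theta}^{**}_{D})$. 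Writing $2^{2\gamma_A}-1=(2^{\gamma_A}-1)(2^{\gamma_A}+1)$ and $2^{2\gamma_D}-1=(2^{\gamma_D}-1)(2^{\gamma_D}+1)$ and using the definitions of $\bar\gamma_1,\bar\gamma_2,\bar\gamma_3$, the $1/\lambda_U$ and $1/\lambda_D$ parts of \eqref{eq:34} and \eqref{eq:31} collapse and one gets
\[
\tilde L_1-\bar L_1=\frac{\kappa_U(2^{\gamma_A}-1)^2\sigma^2_A}{2\lambda_U(\boldsymbol{\theta}^{**}_{U})}+\frac{\kappa_A(2^{\gamma_D}-1)^2\sigma^2_D}{2\lambda_D(\boldsymbol{\theta}^{**}_{D})}-\frac{\bar\gamma_1\lambda_{DU}(\boldsymbol{\theta}^{**}_{U},\boldsymbol{\theta}^{**}_{D})}{\lambda_D(\boldsymbol{\theta}^{**}_{D})\lambda_U(\boldsymbol{\theta}^{**}_{U})}.
\]
Multiplying by $\lambda_U\lambda_D>0$, the desired $\tilde L_1\ge\bar L_1$ is equivalent to $\tfrac{1}{2}\kappa_U(2^{\gamma_A}-1)^2\sigma^2_A\,\lambda_D(\boldsymbol{\theta}^{**}_{D})+\tfrac{1}{2}\kappa_A(2^{\gamma_D}-1)^2\sigma^2_D\,\lambda_U(\boldsymbol{\theta}^{**}_{U})\ge\bar\gamma_1\,\lambda_{DU}(\boldsymbol{\theta}^{**}_{U},\boldsymbol{\theta}^{**}_{D})$.

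To obtain the sufficient condition \eqref{conditionanalysis2} I would bound the three effective gains. Since the CGM vectors phase-align the direct and cascaded links, $\lambda_U(\boldsymbol{\theta}^{**}_{U})=(|h_{AU}|+|\mathbf{f}^H_{AU}\boldsymbol{\theta}^{**}_{U}|)^2\ge(|h_{AU}|-|\mathbf{f}^H_{AU}\boldsymbol{\theta}^{**}_{U}|)^2$ and likewise $\lambda_D(\boldsymbol{\theta}^{**}_{D})\ge(|h_{DA}|-|\mathbf{g}^H_{DA}\boldsymbol{\theta}^{**}_{D}|)^2$, whereas $\boldsymbol{\theta}^{**}_{DU}=[(\boldsymbol{\theta}^{**}_{U})^H,(\boldsymbol{\theta}^{**}_{D})^H]^H$ is \emph{not} matched to $\mathbf{d}$, so the triangle inequality together with $|a+b+c|^2\le 3(|a|^2+|b|^2+|c|^2)$ gives $\lambda_{DU}(\boldsymbol{\theta}^{**}_{U},\boldsymbol{\theta}^{**}_{D})\le 3\big(|g|^2+|\mathbf{f}^H_{DU}\boldsymbol{\theta}^{**}_{U}|^2+|\mathbf{g}^H_{DU}\boldsymbol{\theta}^{**}_{D}|^2\big)$. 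Substituting these into the equivalent inequality, dividing by $3(2^{\gamma_A}-1)(2^{\gamma_D}-1)\sigma^2_A N^2$, and isolating $\kappa_A|g|^2/N^2$ on one side reproduces exactly \eqref{conditionanalysis2}; this proves the first part for any $N$ and any channels.

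For the asymptotic statement I would apply the strong law of large numbers (SLLN) to the cascaded channel magnitudes. Under the stated Rayleigh models, $|f_{AU,n}|=|f_{IU,n}||f_{AI,n}|$ is a product of two independent Rayleigh variables with $\mathbb{E}|f_{IU,n}|=\tfrac{\sqrt{\pi}}{2}\varrho_{f_{IU}}$ and $\mathbb{E}|f_{AI,n}|=\tfrac{\sqrt{\pi}}{2}\varrho_{f_{AI}}$, so $\mathbb{E}|f_{AU,n}|=\tfrac{\pi}{4}\varrho_{f_{IU}}\varrho_{f_{AI}}$; since the CGM vector yields $|\mathbf{f}^H_{AU}\boldsymbol{\theta}^{**}_{U}|=\sum_{n=1}^{\rho N}|f_{AU,n}|$, the SLLN gives $|\mathbf{f}^H_{AU}\boldsymbol{\theta}^{**}_{U}|/N\to\tfrac{\pi}{4}\rho\,\varrho_{f_{IU}}\varrho_{f_{AI}}$, and analogously $|\mathbf{g}^H_{DA}\boldsymbol{\theta}^{**}_{D}|/N\to\tfrac{\pi}{4}(1-\rho)\varrho_{g_{IA}}\varrho_{g_{DI}}$. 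The remaining contributions to the right-hand side of \eqref{conditionanalysis2} are of lower order: $|h_{AU}|^2,|h_{DA}|^2=O(1)$ and the cross products $|h_{AU}||\mathbf{f}^H_{AU}\boldsymbol{\theta}^{**}_{U}|,\,|h_{DA}||\mathbf{g}^H_{DA}\boldsymbol{\theta}^{**}_{D}|=O(N)$, all $o(N^2)$; and, because $\mathbf{f}_{DI}$ (resp.\ $\mathbf{g}_{IU}$) is statistically independent of the CGM vector $\boldsymbol{\theta}^{**}_{U}$ (resp.\ $\boldsymbol{\theta}^{**}_{D}$), the term $\mathbf{f}^H_{DU}\boldsymbol{\theta}^{**}_{U}$ (resp.\ $\mathbf{g}^H_{DU}\boldsymbol{\theta}^{**}_{D}$) is a sum of zero-mean independent terms, so $\mathbb{E}|\mathbf{f}^H_{DU}\boldsymbol{\theta}^{**}_{U}|^2=\rho N\varrho^2_{f_{IU}}\varrho^2_{f_{DI}}$ and $\mathbb{E}|\mathbf{g}^H_{DU}\boldsymbol{\theta}^{**}_{D}|^2=(1-\rho)N\varrho^2_{g_{IU}}\varrho^2_{g_{DI}}$ are only $O(N)$ and hence also $o(N^2)$. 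Passing to the limit in the (already $N^{-2}$-scaled) expression for $U(\boldsymbol{\theta}^{**}_{U},\boldsymbol{\theta}^{**}_{D})$ leaves only the two surviving squared-magnitude terms, producing the claimed limit \eqref{condition22}; strict positivity is immediate because that limit is a sum of products of strictly positive constants.

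The step I expect to be the main obstacle is showing that the UL-to-DL cascaded terms $|\mathbf{f}^H_{DU}\boldsymbol{\theta}^{**}_{U}|^2$ and $|\mathbf{g}^H_{DU}\boldsymbol{\theta}^{**}_{D}|^2$ are $o(N^2)$: one must carefully exploit that the CGM phase pattern $\boldsymbol{\theta}^{**}_{U}$ depends only on $h_{AU},\mathbf{f}_{IU},\mathbf{f}_{AI}$ and is therefore independent of $\mathbf{f}_{DI}$, so the sum defining $\mathbf{f}^H_{DU}\boldsymbol{\theta}^{**}_{U}$ is an incoherent ``random-walk'' combination of magnitude $O(\sqrt{N})$ rather than a coherent $O(N)$ one (and likewise for $\mathbf{g}^H_{DU}\boldsymbol{\theta}^{**}_{D}$ via independence of $\mathbf{g}_{IU}$); a second-moment estimate then suffices, although upgrading the convergence to almost sure requires a routine higher-moment/Borel--Cantelli argument. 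The algebraic reduction and the SLLN limits for the coherently combined terms are otherwise straightforward.
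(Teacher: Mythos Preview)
Your proposal is correct and follows essentially the same route as the paper's proof: evaluate the FD cost at the CGM vectors to upper-bound $L_1(\boldsymbol{\theta}^{*}_{U},\boldsymbol{\theta}^{*}_{D})$, simplify the HD$-$FD difference via $2^{2\gamma}-1=(2^{\gamma}-1)(2^{\gamma}+1)$, apply $|a+b+c|^2\le 3(|a|^2+|b|^2+|c|^2)$ to $\lambda_{DU}$ and $|x+y|^2\ge(|x|-|y|)^2$ to $\lambda_U,\lambda_D$, then rescale by $\lambda_U\lambda_D/(3(2^{\gamma_A}-1)(2^{\gamma_D}-1)\sigma_A^2 N^2)$ to obtain \eqref{conditionanalysis2}; for the asymptotic part the paper likewise invokes the law of large numbers on the coherently combined CGM sums and the Lindeberg--L\'evy CLT (your second-moment/independence argument) to show the cross-interference terms are only $O(N)$ and hence vanish after the $N^{-2}$ scaling. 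Your added remark that upgrading the $o(N^2)$ bound on $|\mathbf{f}_{DU}^{H}\boldsymbol{\theta}^{**}_{U}|^2$ and $|\mathbf{g}_{DU}^{H}\boldsymbol{\theta}^{**}_{D}|^2$ to almost-sure convergence needs a Borel--Cantelli step is a fair caveat that the paper glosses over, but it does not change the approach.
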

\begin{proof}
	Please refer to Appendix \ref{proof_of_The1}.
\end{proof}

Since in  \eqref{conditionanalysis2} $\frac{\kappa_{A}|g|^2}{N^2}\rightarrow 0$ as $N\rightarrow \infty$, the above   condition  is always satisfied for sufficiently large $N$, which implies that with the help of large IRSs, in Case 1 the power consumption of the FD system is asymptotically  lower than that of the HD system, regardless of the interference channel gain $|g|^2$.
Intuitively, this  can be explained by the fact  that with sufficiently large IRSs, the interference between the UL user and the DL user can be effectively suppressed, such that FD is ensured to be more spectral efficient  than HD.

\newtheorem{remark}{Remark}
\begin{remark}
\emph{In contrast,  it can be similarly shown  that without using IRSs,  to guarantee that the power consumption of the FD system is lower than that of the HD system, the following
  condition must be met:
  \begin{equation}
   \kappa_{A}|g|^2\leq \frac{\kappa_{U}(2^{\gamma_{A}}-1)|h_{DA}|^2}{2(2^{\gamma_{D}}-1)}+\frac{\kappa_{A}(2^{\gamma_{D}}-1)|h_{AU}|^2\sigma^2_{D}}{2(2^{\gamma_{A}}-1)\sigma^2_{A}},\label{condition3}
   \end{equation}
which, however,  does not always hold, especially when the UL-to-DL interference is severe, i.e., when $|g|^2$ is large.}
\end{remark}

\subsubsection{Case 2}

Next, we consider  the case of placing all IRS elements to IRS 2  near the DL user. Based on problem \eqref{eq:case1},
 the optimal  transmit power $p_{U}^*$ and $p_{A}^*$ for Case 2 are given by
\begin{equation}
p_{U}^*=\frac{(2^{\gamma_{A}}-1)\sigma^2_{A}}{|h_{AU}|^2},
\end{equation}
\begin{equation}
p_{A}^*=\frac{(2^{\gamma_{D}}-1)\sigma^2_{A}(2^{\gamma_{A}}-1)\omega_{DU}(\boldsymbol{\theta}_{DU})}{|h_{AU}|^2\omega_{D}(\boldsymbol{\theta}_{DU})}+\frac{(2^{\gamma_{D}}-1)\sigma^2_{D}}{\omega_{D}(\boldsymbol{\theta}_{DU})}.
\end{equation}
By defining $\boldsymbol{\theta}^{*}_{DU}$  as the optimal IRS phase-shift vector for Case 2, the minimum power consumption of the FD system can be expressed as
\begin{equation}\label{obj_case2}
L_2(\boldsymbol{\theta}^*_{DU})\triangleq \kappa_{U}p^*_{U}+\kappa_{A}p^*_{A}= \frac{\bar{\gamma}_1\omega_{DU}(\boldsymbol{\theta}^*_{DU})}{|h_{AU}|^2\omega_{D}(\boldsymbol{\theta}^*_{DU})}+\frac{\bar{\gamma}_2}{\omega_{D}(\boldsymbol{\theta}^*_{DU})}
+\frac{\bar{\gamma}_3}{|h_{AU}|^2}.
\end{equation}

Moreover, similar to \eqref{problemhd1case3:b} and \eqref{problemhd1case3:c},  the user rate constraints for the corresponding HD system in this case can be expressed as 
\begin{subequations}\label{problemhd22}
 \begin{align}
  &\frac{1}{2}\log\left(1+\frac{p_{U}|h_{AU}|^2}{\frac{1}{2}\sigma^2_{A}}\right)\geq \gamma_{A},\label{eq:hdcase11}
 \\& \frac{1}{2}\log\left(1+\frac{p_{A}\omega_{D}(\boldsymbol{\theta}_{DU})}{\frac{1}{2}\sigma^2_{D}}\right)\geq \gamma_{D}. \label{eq:hdcase12}
 \end{align}
  \end{subequations}
Then, the minimum power consumption of the HD system in Case 2 can be expressed as
\begin{equation}\label{obj_case2_HD}
\tilde{L}_2(\boldsymbol{\theta}^{**}_{DU})= \frac{\kappa_{U}(2^{2\gamma_{A}}-1)\sigma^2_{A}}{2|h_{AU}|^2}+\frac{\kappa_{A}(2^{2\gamma_{D}}-1)\sigma^2_{D}}{2\omega_{D}(\boldsymbol{\theta}^{**}_{DU})},
\end{equation}
where $\boldsymbol{\theta}^{**}_{DU}\triangleq \arg \max_{\boldsymbol{\theta}_{DU}} \omega_{D}(\boldsymbol{\theta}_{DU})$  denotes the CGM beamforming vector in this case, which is optimal for the  HD system. Similarly, we have
\begin{equation}
\boldsymbol{\theta}^{**}_{DU}=\exp(j(\angle h_{DA} \mathbf{1}-\angle \tilde{\mathbf{g}}_{DA})).\label{eq:suboptima2222}
\end{equation}
The CGM beamforming leads to  the channel gains $\omega_{D}(\boldsymbol{\theta}^{**}_{DU})\triangleq|h_{DA}+\tilde{\mathbf{g}}^H_{DA}\boldsymbol{\theta}^{**}_{DU}|^2$ and  $\omega_{DU}(\boldsymbol{\theta}^{**}_{DU})$ $ \triangleq|g+\tilde{\mathbf{g}}^H_{DU}\boldsymbol{\theta}^{**}_{DU}|^2$ for the link between the AP and the DL user and that between the UL user and the DL user, respectively.
By comparing $L_2(\boldsymbol{\theta}^*_{DU})$ and $\tilde{L}_2(\boldsymbol{\theta}^{**}_{DU})$,  we obtain the following
theorem.
\begin{theorem} \label{theorem_2}
The minimum power consumption of the FD system is lower than that of the HD system in Case 2, if the following condition is satisfied:
\begin{equation}
\begin{split}
\frac{\kappa_{A}|g|^2}{N^2}\leq U(\boldsymbol{\theta}^{**}_{DU}) &\triangleq \frac{\kappa_{U}(2^{\gamma_{A}}-1)(|h_{DA}|^2+|\tilde{\mathbf{g}}^H_{DA}\boldsymbol{\theta}^{**}_{DU}|^2-2|h_{DA}||\tilde{\mathbf{g}}^H_{DA}\boldsymbol{\theta}^{**}_{DU}|)} {4(2^{\gamma_{D}}-1)N^2}\\&\quad+\frac{\kappa_{A}(2^{\gamma_{D}}-1)|h_{AU}|^2\sigma^2_{D}}{4(2^{\gamma_{A}}-1)\sigma^2_{A}N^2}-\frac{\kappa_{A}|\tilde{\mathbf{g}}^H_{DU}\boldsymbol{\theta}^{**}_{DU}|^2}{N^2}.
\label{conditionanalysis2case2}
\end{split}
\end{equation}
Assuming Rayleigh fading channels for all IRS-related links, i.e.,  $\tilde{\mathbf{g}}_{IA} \sim \mathcal{CN}(\mathbf{0}, \varrho^2_{\tilde{g}_{IA}}\mathbf{I})$,  $\tilde{\mathbf{g}}_{DI} \sim \mathcal{CN}(\mathbf{0}, \varrho^2_{\tilde{g}_{DI}}\mathbf{I})$ and $\tilde{\mathbf{g}}_{IU} \sim \mathcal{CN}(\mathbf{0}, \varrho^2_{\tilde{g}_{IU}}\mathbf{I})$, it follows  that when $N$ becomes asymptotically  large, 
 \begin{equation}
U(\boldsymbol{\theta}^{**}_{DU})\rightarrow \frac{\kappa_{U}(2^{\gamma_{A}}-1)(16|h_{DA}|^2+\pi^2\varrho^2_{\tilde{g}_{IA}}\varrho^2_{\tilde{g}_{DI}})}{64(2^{\gamma_{D}}-1)}>0. 
\label{condition2}
\end{equation}
\end{theorem}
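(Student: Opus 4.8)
The plan is to leverage the two optimality statements already in hand: $\boldsymbol{\theta}^{*}_{DU}$ minimizes the FD objective $L_2(\cdot)$ in \eqref{obj_case2}, and the CGM vector $\boldsymbol{\theta}^{**}_{DU}$ in \eqref{eq:suboptima2222} is optimal for the corresponding HD problem, so $\tilde L_2(\boldsymbol{\theta}^{**}_{DU})$ is the HD optimum. Since $L_2(\boldsymbol{\theta}^{*}_{DU})\le L_2(\boldsymbol{\theta}^{**}_{DU})$ trivially, it suffices to prove $\tilde L_2(\boldsymbol{\theta}^{**}_{DU})\ge L_2(\boldsymbol{\theta}^{**}_{DU})$, i.e., to compare the HD and FD objectives evaluated at the \emph{same} phase-shift vector. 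This removes the beamforming optimization and reduces Theorem~\ref{theorem_2} to a scalar inequality.

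Next I would substitute \eqref{obj_case2} and \eqref{obj_case2_HD} into $\tilde L_2(\boldsymbol{\theta}^{**}_{DU})-L_2(\boldsymbol{\theta}^{**}_{DU})$, recall $\bar\gamma_1,\bar\gamma_2,\bar\gamma_3$, and use $2^{2\gamma}-1=(2^{\gamma}-1)(2^{\gamma}+1)$. The $\bar\gamma_3/|h_{AU}|^2$ term of $L_2$ cancels against part of the first HD term, leaving coefficient $\tfrac12\kappa_U(2^{\gamma_A}-1)^2$; similarly the $\bar\gamma_2/\omega_D$ term against the second HD term leaves $\tfrac12\kappa_A(2^{\gamma_D}-1)^2$. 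After clearing the positive factor $|h_{AU}|^2\omega_D(\boldsymbol{\theta}^{**}_{DU})\big/[\kappa_A(2^{\gamma_D}-1)(2^{\gamma_A}-1)\sigma_A^2]$, the desired inequality becomes
\[
\omega_{DU}(\boldsymbol{\theta}^{**}_{DU})\;\le\;\frac{\kappa_U(2^{\gamma_A}-1)\,\omega_D(\boldsymbol{\theta}^{**}_{DU})}{2\kappa_A(2^{\gamma_D}-1)}+\frac{(2^{\gamma_D}-1)\sigma_D^2|h_{AU}|^2}{2(2^{\gamma_A}-1)\sigma_A^2}.
\]
To reach the stated form \eqref{conditionanalysis2case2} I would then upper-bound $\omega_{DU}(\boldsymbol{\theta}^{**}_{DU})=|g+\tilde{\mathbf{g}}^H_{DU}\boldsymbol{\theta}^{**}_{DU}|^2\le 2|g|^2+2|\tilde{\mathbf{g}}^H_{DU}\boldsymbol{\theta}^{**}_{DU}|^2$ and lower-bound $\omega_D(\boldsymbol{\theta}^{**}_{DU})\ge(|h_{DA}|-|\tilde{\mathbf{g}}^H_{DA}\boldsymbol{\theta}^{**}_{DU}|)^2$; solving the resulting inequality for $|g|^2$ and multiplying by $\kappa_A/(2N^2)$ yields exactly \eqref{conditionanalysis2case2}. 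Because every replacement only strengthened the hypothesis, \eqref{conditionanalysis2case2} is a valid sufficient condition.

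For the asymptotic statement I would apply the strong law of large numbers to the IRS links. With $\boldsymbol{\theta}^{**}_{DU}$ given by \eqref{eq:suboptima2222}, the aligned term collapses to $|\tilde{\mathbf{g}}^H_{DA}\boldsymbol{\theta}^{**}_{DU}|=\sum_{n=1}^N|\tilde g_{IA,n}|\,|\tilde g_{DI,n}|$; using $\mathbb{E}|X|=\tfrac{\sqrt\pi}{2}\sigma$ for $X\sim\mathcal{CN}(0,\sigma^2)$ and the independence of $\tilde{\mathbf{g}}_{IA}$ and $\tilde{\mathbf{g}}_{DI}$, this equals $\tfrac{\pi}{4}\varrho_{\tilde g_{IA}}\varrho_{\tilde g_{DI}}\,N+o(N)$, so its square contributes at order $N^2$, while $|h_{DA}|^2$, the cross term $2|h_{DA}|\,|\tilde{\mathbf{g}}^H_{DA}\boldsymbol{\theta}^{**}_{DU}|$, and $\kappa_A(2^{\gamma_D}-1)|h_{AU}|^2\sigma_D^2$ are $O(1)$ or $O(N)$ and vanish after dividing by $N^2$. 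For the negative term I would note that each summand of $\tilde{\mathbf{g}}^H_{DU}\boldsymbol{\theta}^{**}_{DU}=\sum_n\overline{[\tilde{\mathbf{g}}_{DU}]_n}[\boldsymbol{\theta}^{**}_{DU}]_n$ has zero conditional mean, because $\tilde{\mathbf{g}}_{IU}$ is independent of $(\tilde{\mathbf{g}}_{IA},\tilde{\mathbf{g}}_{DI})$ and circularly symmetric; hence that sum grows only like $\sqrt N$ and $|\tilde{\mathbf{g}}^H_{DU}\boldsymbol{\theta}^{**}_{DU}|^2/N^2\to0$. Keeping the single surviving $\Theta(N^2)$ contribution gives that $U(\boldsymbol{\theta}^{**}_{DU})$ converges to the strictly positive limit in \eqref{condition2}; combined with $\kappa_A|g|^2/N^2\to0$, the sufficient condition \eqref{conditionanalysis2case2} holds for all sufficiently large $N$.

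The main obstacle is the last step: one must show $|\tilde{\mathbf{g}}^H_{DU}\boldsymbol{\theta}^{**}_{DU}|^2$ is only $o(N^2)$ even though the optimized phases $\boldsymbol{\theta}^{**}_{DU}$ are \emph{correlated} with $\tilde{\mathbf{g}}_{DU}$ through the common factor $\tilde{\mathbf{g}}_{DI}$. The resolution is that the other factor, $\tilde{\mathbf{g}}_{IU}$, is fresh randomness that annihilates the conditional mean of each summand, so the interference-through-IRS term stays at the random-walk scale $\sqrt N$ rather than the coherent scale $N$. Everything else — the scalar algebra, the triangle-inequality bounds, and the SLLN evaluation of the aligned term — is routine.
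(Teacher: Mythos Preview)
Your proposal is correct and follows essentially the same route as the paper's proof: replace the optimal $\boldsymbol{\theta}^{*}_{DU}$ by the CGM vector $\boldsymbol{\theta}^{**}_{DU}$ (using $L_2(\boldsymbol{\theta}^{*}_{DU})\le L_2(\boldsymbol{\theta}^{**}_{DU})$), simplify $\tilde L_2-L_2$ via $2^{2\gamma}-1=(2^\gamma-1)(2^\gamma+1)$, then apply the bounds $\omega_{DU}\le 2(|g|^2+|\tilde{\mathbf g}_{DU}^H\boldsymbol{\theta}^{**}_{DU}|^2)$ and $\omega_D\ge(|h_{DA}|-|\tilde{\mathbf g}_{DA}^H\boldsymbol{\theta}^{**}_{DU}|)^2$, and finally invoke a law of large numbers / CLT for the asymptotic limit. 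The only cosmetic difference is the order in which you clear denominators; your explicit discussion of why the correlation between $\boldsymbol{\theta}^{**}_{DU}$ and $\tilde{\mathbf g}_{DU}$ through the shared factor $\tilde{\mathbf g}_{DI}$ does not spoil the $o(N^2)$ bound is in fact more careful than the paper, which simply cites the Lindeberg--L\'evy CLT.
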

\begin{proof}
Please refer to Appendix \ref{appendixA}.
\end{proof}
From the above, we see that with sufficiently large $N$, in Case 2 the FD system always outperforms the HD system, regardless of the interference channel gain $|g|^2$.

 \subsubsection{Case 3}
Last, for the case of placing all reflecting elements to  IRS 1 near the UL user,
the minimum power consumption of the FD system can be expressed as
\begin{equation} L_3(\tilde{\boldsymbol{\theta}}^*_{DU})\triangleq 
\frac{\bar{\gamma}_1\xi_{DU}(\tilde{\boldsymbol{\theta}}^*_{DU})}{|h_{DA}|^2\xi_{U}(\tilde{\boldsymbol{\theta}}^*_{DU})}
	+\frac{\bar{\gamma}_3}{\xi_{U}(\tilde{\boldsymbol{\theta}}^*_{DU})}+\frac{\bar{\gamma}_2}{|h_{DA}|^2}, \label{powerconsumption2222}
\end{equation}
where $\tilde{\boldsymbol{\theta}}^*_{DU}$ denotes the optimal IRS phase-shift vector for Case 3.
By following the same approach for Case 2, we can obtain  the same conclusion for the FD system in Case 3  when $N$ becomes sufficiently large.  Hence, we omit the details for brevity.

\subsection{Comparison of  Different IRS Deployment Cases}
\label{subsectionIVB}

In the following, we focus on the FD system and compare the minimum  power consumption among Case 1, Case 2 and Case 3  to find  which deployment strategy performs the best. In particular, we derive an   asymptotic result   when the number of reflecting elements $N$ becomes large.

With the minimum power consumption in Case 1 expressed in \eqref{eq:31}, we provide the following theorem.
\begin{theorem} \label{Theorem_3}
Assuming Rayleigh fading channels for all IRS-related links, the upper bound of the minimum power consumption of the FD system in Case 1 quadratically decreases to $0$  with the increasing of $N$.
\end{theorem}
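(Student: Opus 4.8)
The plan is to bound the optimal value $L_1(\boldsymbol{\theta}^*_{U}, \boldsymbol{\theta}^*_{D})$ from above by evaluating the expression \eqref{eq:31} at the channel‑gain‑maximizing (CGM) phase‑shift vectors $\boldsymbol{\theta}^{**}_{U}$ and $\boldsymbol{\theta}^{**}_{D}$ of \eqref{eq:suboptimal1111}. Since any pair of uni‑modulus vectors is feasible for \eqref{eq:case3}, optimality of $\boldsymbol{\theta}^*_{U},\boldsymbol{\theta}^*_{D}$ gives $L_1(\boldsymbol{\theta}^*_{U}, \boldsymbol{\theta}^*_{D}) \le L_1(\boldsymbol{\theta}^{**}_{U}, \boldsymbol{\theta}^{**}_{D})$, so it suffices to show that this right‑hand side --- the claimed upper bound --- is $\Theta(1/N^2)$, hence decreases quadratically to $0$.

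The core estimate concerns the two direct‑link gains. With the CGM choice, $\lambda_{U}(\boldsymbol{\theta}^{**}_{U}) = \bigl(|h_{AU}| + \sum_{n}|f_{AU}(n)|\bigr)^2$ and $\lambda_{D}(\boldsymbol{\theta}^{**}_{D}) = \bigl(|h_{DA}| + \sum_{n}|g_{DA}(n)|\bigr)^2$, where by the definitions $\mathbf{f}_{AU}=\textrm{Diag}(\textrm{conj}(\mathbf{f}_{IU}))\mathbf{f}_{AI}$ and $\mathbf{g}_{DA}=\textrm{Diag}(\textrm{conj}(\mathbf{g}_{IA}))\mathbf{g}_{DI}$ each $|f_{AU}(n)| = |f_{IU}(n)|\,|f_{AI}(n)|$ is a product of two independent Rayleigh magnitudes, and likewise for $|g_{DA}(n)|$. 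Using $\mathbb{E}\bigl[\,|f_{IU}(n)|\,|f_{AI}(n)|\,\bigr] = \tfrac{\pi}{4}\varrho_{f_{IU}}\varrho_{f_{AI}}$ together with the strong law of large numbers over the $\rho N$ (resp.\ $(1-\rho)N$) i.i.d.\ summands, one gets $\tfrac1N\sum_{n}|f_{AU}(n)| \to \tfrac{\pi}{4}\rho\,\varrho_{f_{IU}}\varrho_{f_{AI}}$ and $\tfrac1N\sum_{n}|g_{DA}(n)| \to \tfrac{\pi}{4}(1-\rho)\varrho_{g_{IA}}\varrho_{g_{DI}}$, so that $\lambda_{U}(\boldsymbol{\theta}^{**}_{U})$ and $\lambda_{D}(\boldsymbol{\theta}^{**}_{D})$ grow as $\Theta(N^2)$ with leading constants $\tfrac{\pi^2}{16}\rho^2\varrho^2_{f_{IU}}\varrho^2_{f_{AI}}$ and $\tfrac{\pi^2}{16}(1-\rho)^2\varrho^2_{g_{IA}}\varrho^2_{g_{DI}}$, respectively.

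It remains to dispose of the interference term. Here it suffices to use the crude deterministic bound $\lambda_{DU}(\boldsymbol{\theta}^{**}_{U},\boldsymbol{\theta}^{**}_{D}) = |g+\mathbf{d}^H\boldsymbol{\theta}^{**}_{DU}|^2 \le \bigl(|g| + \|\mathbf{d}\|_1\bigr)^2$, where $\|\mathbf{d}\|_1=\sum_n|[\mathbf{d}]_n|$ is once more a sum of $N$ products of independent Rayleigh magnitudes and hence $\Theta(N)$ by the same law of large numbers, giving $\lambda_{DU}(\boldsymbol{\theta}^{**}_{U},\boldsymbol{\theta}^{**}_{D}) = O(N^2)$. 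Substituting the three estimates into \eqref{eq:31}: the first term is $O(N^2)/[\Theta(N^2)\cdot\Theta(N^2)] = O(N^{-2})$, while the second and third terms are $\bar{\gamma}_2/\lambda_{D}(\boldsymbol{\theta}^{**}_{D}) = \Theta(N^{-2})$ and $\bar{\gamma}_3/\lambda_{U}(\boldsymbol{\theta}^{**}_{U}) = \Theta(N^{-2})$; hence $L_1(\boldsymbol{\theta}^{**}_{U},\boldsymbol{\theta}^{**}_{D}) = \Theta(N^{-2}) \to 0$, which is the claim. A cleaner statement with the explicit limit $\bigl(\tfrac{16\bar{\gamma}_2}{\pi^2(1-\rho)^2\varrho^2_{g_{IA}}\varrho^2_{g_{DI}}} + \tfrac{16\bar{\gamma}_3}{\pi^2\rho^2\varrho^2_{f_{IU}}\varrho^2_{f_{AI}}}\bigr)N^{-2}$ follows if one additionally shows $\lambda_{DU}(\boldsymbol{\theta}^{**}_{U},\boldsymbol{\theta}^{**}_{D}) = O(N)$, exploiting that $\boldsymbol{\theta}^{**}_{U}$ is phase‑matched to $\mathbf{f}_{AU}$ rather than to $\mathbf{f}_{DU}$ so the summands of $\mathbf{f}_{DU}^H\boldsymbol{\theta}^{**}_{U}$ have effectively i.i.d.\ uniform phases and combine incoherently; this renders the first term $O(N^{-3})$ and negligible.

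The main obstacle is the asymptotic evaluation of the direct‑link gains: one must justify passing to the large‑$N$ limit for the $\ell_1$‑norms of the \emph{cascaded} (product‑of‑two‑Rayleigh) IRS channels, i.e.\ verify that such products have finite mean $\tfrac{\pi}{4}\varrho\varrho'$ so that the law of large numbers applies and yields the $\Theta(N^2)$ growth with the stated constants. Once that is in place, everything else is routine algebra with \eqref{eq:31}.
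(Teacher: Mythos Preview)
Your proposal is correct and follows essentially the same route as the paper's proof: upper bound $L_1(\boldsymbol{\theta}^*_{U},\boldsymbol{\theta}^*_{D})$ by evaluating at the CGM vectors $\boldsymbol{\theta}^{**}_{U},\boldsymbol{\theta}^{**}_{D}$, use the law of large numbers on the products of independent Rayleigh magnitudes to obtain $\lambda_U(\boldsymbol{\theta}^{**}_{U}),\lambda_D(\boldsymbol{\theta}^{**}_{D})=\Theta(N^2)$, and control the interference term so that the second and third summands in \eqref{eq:31} dominate at order $N^{-2}$. The only cosmetic differences are that the paper bounds $\lambda_{DU}$ via Cauchy--Schwarz as $\leq 3(|g|^2+|\mathbf{f}_{DU}^H\boldsymbol{\theta}^{**}_{U}|^2+|\mathbf{g}_{DU}^H\boldsymbol{\theta}^{**}_{D}|^2)$ and then invokes the central limit theorem to get this $O(N)$ directly (rather than your two-step crude $O(N^2)$ followed by the $O(N)$ refinement), and it inserts an unnecessary lower bound $\lambda_D(\boldsymbol{\theta}^{**}_{D})\geq(|h_{DA}|-|\mathbf{g}_{DA}^H\boldsymbol{\theta}^{**}_{D}|)^2$ where you correctly note the CGM choice already gives the exact value $(|h_{DA}|+\sum_n|g_{DA}(n)|)^2$.
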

\begin{proof}
Please refer to  Appendix \ref{appendixBB}.
\end{proof}
\noindent
Besides, we can   rewrite the expressions of the minimum power consumption of the FD system in \eqref{obj_case2} and \eqref{powerconsumption2222}  for Case 2 and Case 3 as follows:
\begin{equation}
L_2(\boldsymbol{\theta}^*_{DU})=\frac{\bar{\gamma}_1\omega_{DU}(\boldsymbol{\theta}^*_{DU})}{|h_{AU}|^2\omega_{D}(\boldsymbol{\theta}^*_{DU})}+\frac{\bar{\gamma}_2}{\omega_{D}(\boldsymbol{\theta}^*_{DU})}
+\frac{\bar{\gamma}_3}{|h_{AU}|^2}> \frac{\bar{\gamma}_3}{|h_{AU}|^2}>0, \label{powerconsumption1}
\end{equation}
\begin{equation}
	L_3(\tilde{\boldsymbol{\theta}}^*_{DU})=\frac{\bar{\gamma}_1\xi_{DU}(\tilde{\boldsymbol{\theta}}^*_{DU})}{|h_{DA}|^2\xi_{U}(\tilde{\boldsymbol{\theta}}^*_{DU})}
	+\frac{\bar{\gamma}_3}{\xi_{U}(\tilde{\boldsymbol{\theta}}^*_{DU})}+\frac{\bar{\gamma}_2}{|h_{DA}|^2}> \frac{\bar{\gamma}_2}{|h_{DA}|^2}>0.\label{powerconsumption2}
\end{equation}
It is observed from \eqref{powerconsumption1} and \eqref{powerconsumption2} that $L_2(\boldsymbol{\theta}^*_{DU})$ and $L_3(\tilde{\boldsymbol{\theta}}^*_{DU})$ are lower-bounded by $\frac{\bar{\gamma}_3}{|h_{AU}|^2}$ and $\frac{\bar{\gamma}_2}{|h_{DA}|^2}$, respectively. Based on \eqref{powerconsumption1}, \eqref{powerconsumption2} and Theorem 3,
we can conclude that  with sufficiently large $N$, the minimum power consumption in Case 1 is much lower than that in either Case
2 or 3, which means that the distributed deployment generally outperforms the centralized
deployment in terms of power consumption. The reason is that, when $N\rightarrow \infty$, in Case 1 both users have asymptotically large channel gains and the DL user has diminishing interference, whereas Case 2 and Case 3 cannot achieve all the above at the same time.

\section{ passive beamforming  Optimization }
\label{Section4:passivebeamformingalgorithm}
In this section,  we propose a new BCD-based algorithm to optimize the passive beamforming vector at the IRSs, for a general channel model and an arbitrary value of $N$. Subsequently, we discuss the convergence and computational complexity of the proposed algorithm.

\subsection{BCD-Based Passive Beamforming Design}
Let us first focus on Case 1 to introduce the proposed algorithm in general (as Cases 2 and 3 are special cases of Case 1). By substituting the optimal transmit powers in \eqref{optimal_p_1} and \eqref{optimal_p_2} into the original optimization problem \eqref{eq:case3}, we have the following equivalent problem:
\begin{equation}
\begin{split}
\min_{\boldsymbol{\phi}_{U}, \boldsymbol{\phi}_{D}}&\quad \frac{\bar{\gamma}_1|g+\mathbf{f}^H_{DU}\boldsymbol{\psi}_{U}+\mathbf{g}^H_{DU}\boldsymbol{\psi}_{D}|^2 }{|h_{DA}+\mathbf{g}^H_{DA}\boldsymbol{\psi}_{D}|^2|h_{AU}+\mathbf{f}^H_{AU}\boldsymbol{\psi}_{U}|^2}+\frac{\bar{\gamma}_2}{|h_{DA}+\mathbf{g}^H_{DA}\boldsymbol{\psi}_{D}|^2}+\frac{\bar{\gamma}_3}{|h_{AU}+\mathbf{f}^H_{AU}\boldsymbol{\psi}_{U}|^2},
\end{split}\label{problemtheta1}
\end{equation}
where $\boldsymbol{\psi}_{U}\triangleq \exp(j\boldsymbol{\phi}_{U})$, $\boldsymbol{\psi}_{D}\triangleq \exp(j\boldsymbol{\phi}_{D})$, $\boldsymbol{\phi}_{U}\triangleq \angle \boldsymbol{\theta}_{U}$ and $\boldsymbol{\phi}_{D} \triangleq \angle\boldsymbol{\theta}_{D}$. Note that in \eqref{problemtheta1}, the phase-shift values (i.e., $\boldsymbol{\phi}_{U}$ and $\boldsymbol{\phi}_{D}$) of the IRS reflection coefficients are regarded as the optimization variables and $\{\boldsymbol{\psi}_{U},\boldsymbol{\psi}_{D}\}$ are treated as functions of these phase-shift values, thus the original uni-modular constraint \eqref{constantmodulus2223} in problem \eqref{eq:case3} can be safely ignored.

Problem \eqref{problemtheta1} is still  difficult to solve due to the fractional coupling terms in the objective function. In this paper, we first transform \eqref{problemtheta1} into an equivalent but more tractable form and then develop a BCD-based algorithm to solve the equivalent  problem.
By reducing the fractions of the objective function in \eqref{problemtheta1} to a common denominator and taking the reciprocal, we obtain an equivalent expression of problem \eqref{problemtheta1} as
\begin{equation}
\max_{\boldsymbol{\phi}_{U}, \boldsymbol{\phi}_{D}} \quad
\frac{|h_{DA}+\mathbf{g}^H_{DA}\boldsymbol{\psi}_{D}|^2|h_{AU}+\mathbf{f}^H_{AU}\boldsymbol{\psi}_{U}|^2}{\bar{\gamma}_1|g+\mathbf{f}^H_{DU}\boldsymbol{\psi}_{U}+\mathbf{g}^H_{DU}\boldsymbol{\psi}_{D}|^2 + \bar{\gamma}_2|h_{AU}+\mathbf{f}^H_{AU}\boldsymbol{\psi}_{U}|^2 + \bar{\gamma}_3|h_{DA}+\mathbf{g}^H_{DA}\boldsymbol{\psi}_{D}|^2}.
\label{eq_problemtheta2222}
\end{equation}
 Then,  we can  state the following theorem. 
\begin{theorem}
By introducing an auxiliary variable $v\in \mathbb{R}$, problem \eqref{eq_problemtheta2222} can
be equivalently formulated as the following problem
in the sense that both problems share the same global optimal solutions for $\boldsymbol{\phi}_{U}$ and $\boldsymbol{\phi}_{D}$:
\begin{equation}
\begin{split}
\min_{v, \boldsymbol{\phi}_{U}, \boldsymbol{\phi}_{D}}&\quad
v^2(\bar{\gamma}_1|g+\mathbf{f}^H_{DU}\boldsymbol{\psi}_{U}+\mathbf{g}^H_{DU}\boldsymbol{\psi}_{D}|^2 + \bar{\gamma}_2|h_{AU}+\mathbf{f}^H_{AU}\boldsymbol{\psi}_{U}|^2 + \bar{\gamma}_3|h_{DA}+\mathbf{g}^H_{DA}\boldsymbol{\psi}_{D}|^2)\\
&\quad-2\Re(\emph{\textrm{conj}}(v) (h_{DA}+\mathbf{g}^H_{DA}\boldsymbol{\psi}_{D})(h_{AU}+\mathbf{f}^H_{AU}\boldsymbol{\psi}_{U})).
\end{split}
\label{trans_prob}
\end{equation}
\end{theorem}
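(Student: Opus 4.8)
The plan is to treat \eqref{eq_problemtheta2222} as a single-ratio fractional program and dispose of it with a quadratic transform, which is exactly what the auxiliary variable $v$ is meant to implement. First I would fix notation: set $c_{D}\triangleq h_{DA}+\mathbf{g}^H_{DA}\boldsymbol{\psi}_{D}$, $c_{U}\triangleq h_{AU}+\mathbf{f}^H_{AU}\boldsymbol{\psi}_{U}$, and $c_{DU}\triangleq g+\mathbf{f}^H_{DU}\boldsymbol{\psi}_{U}+\mathbf{g}^H_{DU}\boldsymbol{\psi}_{D}$, so that the objective of \eqref{eq_problemtheta2222} is $\Psi\triangleq |c_{D}c_{U}|^{2}/B$ with $B\triangleq\bar{\gamma}_{1}|c_{DU}|^{2}+\bar{\gamma}_{2}|c_{U}|^{2}+\bar{\gamma}_{3}|c_{D}|^{2}$, and the objective of \eqref{trans_prob} is $G(v)\triangleq v^{2}B-2\Re(\textrm{conj}(v)\,c_{D}c_{U})$. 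I would record at the outset that $\bar{\gamma}_{1},\bar{\gamma}_{2},\bar{\gamma}_{3}>0$, so $B\geq\bar{\gamma}_{3}|c_{D}|^{2}$; hence at any point with $\Psi>0$ one automatically has $B>0$, and since $\Psi$ is continuous on the compact phase torus (extending by $0$ where $c_D=c_U=0$) and its maximum is strictly positive, that maximum is attained at a point where $B>0$. This legitimizes every later division by $B$.

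The core step is to eliminate $v$ for a fixed pair $(\boldsymbol{\phi}_{U},\boldsymbol{\phi}_{D})$. Because $B>0$, $G$ is a strictly convex quadratic in $v$, so setting $\partial G/\partial v=0$ gives the unique minimizer $v^{\star}=c_{D}c_{U}/B$ and the minimum value $\min_{v}G=-|c_{D}c_{U}|^{2}/B=-\Psi$. Substituting this back, $\min_{v,\boldsymbol{\phi}_{U},\boldsymbol{\phi}_{D}}G=-\max_{\boldsymbol{\phi}_{U},\boldsymbol{\phi}_{D}}\Psi$, and a pair $(\boldsymbol{\phi}_{U},\boldsymbol{\phi}_{D})$ attains the left-hand minimum if and only if it attains the right-hand maximum; conversely, appending $v^{\star}=c_{D}c_{U}/B$ to any maximizer of \eqref{eq_problemtheta2222} yields a minimizer of \eqref{trans_prob}. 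This is precisely the two-way correspondence of global optimizers in $(\boldsymbol{\phi}_{U},\boldsymbol{\phi}_{D})$ asserted by the theorem. The preliminary algebra that turns \eqref{problemtheta1} into \eqref{eq_problemtheta2222} (reducing the three fractions to a common denominator, taking the reciprocal, and the monotone reversal of $\min$ into $\max$) is routine and I would only sketch it.

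The step I expect to be the main obstacle is the phase bookkeeping inside the $v$-minimization. If $v$ is kept in $\mathbb{R}$ exactly as written, then $\textrm{conj}(v)=v$ and one only obtains $\min_{v}G=-\Re(c_{D}c_{U})^{2}/B$, which equals $-\Psi$ only when $c_{D}c_{U}$ is real and nonnegative; and since $\Psi$ is a function of $(|c_{D}|,|c_{U}|,|c_{DU}|)$ alone while rotating the phase of $c_{D}c_{U}$ generally perturbs $|c_{DU}|$ (hence $B$), one cannot freely assume the optimal phases make $c_{D}c_{U}$ real. The clean resolution—and the one I would adopt—is to carry out the transform with $v\in\mathbb{C}$, for which $v^{\star}=c_{D}c_{U}/B$ is optimal with no phase condition and $\min_{v}G=-|c_{D}c_{U}|^{2}/B$ holds verbatim (this is also consistent with the presence of $\textrm{conj}(v)$ in \eqref{trans_prob}); alternatively, if the real-$v$ statement is to be kept literally, one must supply an extra argument that some global maximizer of \eqref{eq_problemtheta2222} can be chosen so that $c_{D}c_{U}\ge 0$. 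Everything else—strict convexity of $G$ in $v$, attainment of the outer optimum by compactness, and well-definedness of $v^{\star}$—follows immediately once $B>0$ is in hand.
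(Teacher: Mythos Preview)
Your approach is essentially identical to the paper's: minimize the objective of \eqref{trans_prob} over $v$ with the phases held fixed, obtain $v^{\star}=c_{D}c_{U}/B$, substitute back to recover $-\Psi$, and read off the equivalence of global optimizers. Your caveat about $v\in\mathbb{R}$ versus $v\in\mathbb{C}$ is well-taken and in fact sharper than the paper, which states $v\in\mathbb{R}$ in the theorem but then writes $v^{*}=b/a$ with complex $b=c_{D}c_{U}$ both in the proof and in Step~1 of the algorithm; your resolution of carrying out the quadratic transform with $v\in\mathbb{C}$ is the correct reading.
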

\begin{proof}
Please refer to Appendix \ref{appendixB}.
\end{proof}

Problem \eqref{trans_prob} can be solved based on the BCD method, where we partition the variables into multiple blocks which are updated sequentially in each iteration.
To this end, the variables are updated as follows: 1) update $v$ by fixing the other variables; 2) update $\boldsymbol{\phi}_{U}(n), \forall n,$ sequentially with the other variables fixed; 3) update $\boldsymbol{\phi}_{D}(n), \forall n,$ sequentially with the other variables fixed.
The detailed updating procedure is presented as follows.

In \textbf{Step 1},  we optimize $v$ by fixing the other variables. The subproblem for $v$ is given by
\begin{equation}
\begin{split}
\min_{v}\quad
v^2a-2\Re(\textrm{conj}(v) b),
\end{split}
\end{equation}
where $a=\bar{\gamma}_1|g+\mathbf{f}^H_{DU}\boldsymbol{\psi}_{U}+\mathbf{g}^H_{DU}\boldsymbol{\psi}_{D}|^2 + \bar{\gamma}_2|h_{AU}+\mathbf{f}^H_{AU}\boldsymbol{\psi}_{U}|^2 + \bar{\gamma}_3|h_{DA}+\mathbf{g}^H_{DA}\boldsymbol{\psi}_{D}|^2$ and $b=(h_{DA}+\mathbf{g}^H_{DA}\boldsymbol{\psi}_{D})(h_{AU}+\mathbf{f}^H_{AU}\boldsymbol{\psi}_{U})$.
We can obtain the optimal solution of this subproblem as $v^{*}=\frac{b}{a}$ by checking the first order optimality condition.

In \textbf{Step 2},  we optimize $\boldsymbol{\phi}_{U}(n), \forall n$, sequentially by fixing the other variables.
The corresponding subproblem for $\boldsymbol{\phi}_{U}(n)$ is given by
\begin{equation}
\begin{split}
\min_{\boldsymbol{\phi}_{U}(n)}&\quad
v^2(\bar{\gamma}_1|g+\mathbf{f}^H_{DU}\boldsymbol{\psi}_{U}+\mathbf{g}^H_{DU}\boldsymbol{\psi}_{D}|^2 + \bar{\gamma}_2|h_{AU}+\mathbf{f}^H_{AU}\boldsymbol{\psi}_{U}|^2 )\\
&\quad-2\Re(\textrm{conj}(v)(h_{DA}+\mathbf{g}^H_{DA}\boldsymbol{\psi}_{D})(h_{AU}+\mathbf{f}^H_{AU}\boldsymbol{\psi}_{U})).
\end{split}\label{phi1_subproblem}
\end{equation}

By appropriately rearranging the objective function of problem \eqref{phi1_subproblem}, we have the following equivalent problem:
\begin{equation}
\begin{split}
\min_{\boldsymbol{\phi}_{U}(n)}&\quad
\boldsymbol{\psi}_{U}^H\mathbf{A}_{\psi_{U}}\boldsymbol{\psi}_{U} - 2\Re(\boldsymbol{\psi}_{U}^H\mathbf{b}_{\psi_{U}}),
\end{split}\label{phi1_subproblem_trans}
\end{equation}
where $\mathbf{A}_{\psi_{U}}\triangleq |v|^2(\bar{\gamma}_1\mathbf{f}_{DU}\mathbf{f}_{DU}^H + \bar{\gamma}_2\mathbf{f}_{AU}\mathbf{f}_{AU}^H)$ and $\mathbf{b}_{\psi_{U}}\triangleq v( \textrm{conj}(h_{DA})+\boldsymbol{\phi}_{D}^H\mathbf{g}_{DA})\mathbf{f}_{AU} - |v|^2( \bar{\gamma}_1\mathbf{f}_{DU}(g+\mathbf{g}_{DU}^H\boldsymbol{\phi}_{D}) + \bar{\gamma}_2\mathbf{f}_{AU}h_{AU} )$.
It is readily seen that the objective function of problem \eqref{phi1_subproblem_trans} is a quadratic function with respect to $\boldsymbol{\psi}_{U}(n)$. Hence, by omitting the irrelevant constant terms, we can rewrite problem \eqref{phi1_subproblem_trans} as follows:
\begin{equation}
\begin{split}
\min_{\boldsymbol{\phi}_{U}(n)}&\quad
\bar{a}_{\psi_{U,n}}|\boldsymbol{\psi}_{U}(n)|^2 - 2\Re(\textrm{conj}(\bar{b}_{\psi_{U,n}})\boldsymbol{\psi}_{U}(n)),
\end{split}\label{phi1_subproblem_final}
\end{equation}
where $\bar{a}_{\psi_{U,n}}$ is a known coefficient that is not related to $\boldsymbol{\phi}_{U}(n)$, $\bar{b}_{\psi_{U,n}} = \mathbf{A}_{\psi_{U}}(n, n)\boldsymbol{\psi}_{U}(n) - \mathbf{A}_{\psi_{U}}(n, :)\boldsymbol{\psi}_{U} + \mathbf{b}_{\psi_{U}}(n),$ and $\mathbf{A}_{\psi_{U}}(n, :)$ denotes the $n$th row vector of matrix $\mathbf{A}_{\psi_{U}}$. Obviously, the optimal solution of problem \eqref{phi1_subproblem_final} is given by $\boldsymbol{\phi}^{*}_{U}(n)=\angle \bar{b}_{\psi_{U,n}}$.

In \textbf{Step 3},  we optimize $\boldsymbol{\phi}_{D}(n), \forall n$, sequentially by fixing the other variables. The corresponding subproblem is given by
\begin{equation}
\begin{split}
\min_{\boldsymbol{\phi}_{D}(n)}&\quad
v^2(\bar{\gamma}_1|g+\mathbf{f}^H_{DU}\boldsymbol{\psi}_{U}+\mathbf{g}^H_{DU}\boldsymbol{\psi}_{D}|^2 + \bar{\gamma}_3|h_{DA}+\mathbf{g}^H_{DA}\boldsymbol{\psi}_{D}|^2 )\\
&\quad-2\Re(\textrm{conj}(v(h_{DA})+\mathbf{g}^H_{DA}\boldsymbol{\psi}_{D})(h_{AU}+\mathbf{f}^H_{AU}\boldsymbol{\psi}_{U})).
\end{split}\label{phi2_subproblem}
\end{equation}
This subproblem can be similarly solved as problem \eqref{phi1_subproblem}, thus the details are omitted for brevity.

In summary, we can solve problem \eqref{problemtheta1} by iterating over the abovementioned three steps and the overall  BCD-based
algorithm is summarized  in Algorithm \ref{tab:table1}.
\begin{algorithm}
  \centering
  \caption{ Proposed BCD-based passive beamforming design algorithm}
 \begin{itemize}
 \item[1.] Define the tolerance of accuracy $\delta$. Initialize the algorithm with a feasible point. Set the iteration number $i=0$ and the maximum iteration number $N_{\max }$.
 \item[2.] \textbf{Repeat}
    \begin{itemize}
    \item Update $v$, $\boldsymbol{\phi}_{U}(n), \forall n$ and $\boldsymbol{\phi}_{D}(n), \forall n$ sequentially according to \textbf{Steps} 1-3, respectively.
   \item Update the iteration number : $i \leftarrow i+1$.
  \end{itemize}
 \item[3.] \textbf{Until}
the fractional decrease of \eqref{trans_prob} is less than $\delta$ or the maximum number of iterations is reached, i.e., $i > {N_{\max }}$.
 \end{itemize}
\label{tab:table1}
\end{algorithm}

Besides, the passive beamforming optimization problems for Case 2 and Case 3 can be written as
\begin{equation}
\begin{split}
\min_{\boldsymbol{\phi}_{DU}} & \quad \frac{\bar{\gamma}_1|g+\tilde{\mathbf{g}}^H_{DU}\boldsymbol{\psi}_{DU}|^2}{|h_{AU}|^2|h_{DA}+\tilde{\mathbf{g}}^H_{DA}\boldsymbol{\psi}_{DU}|^2}+\frac{\bar{\gamma}_2}{|h_{DA}+\tilde{\mathbf{g}}^H_{DA}\boldsymbol{\psi}_{DU}|^2},
\end{split}\label{problemtheta12222}
\end{equation}
\begin{equation}
\begin{split}
\min_{\tilde{\boldsymbol{\phi}}_{DU}}&\quad \frac{\bar{\gamma}_1|g+\tilde{\mathbf{f}}^H_{DU}\tilde{\boldsymbol{\psi}}_{DU}|^2}{|h_{DA}|^2|h_{AU}+\tilde{\mathbf{f}}^H_{AU}\tilde{\boldsymbol{\psi}}_{DU}|^2}
+\frac{\bar{\gamma}_3}{|h_{AU}+\tilde{\mathbf{f}}^H_{AU}\tilde{\boldsymbol{\psi}}_{DU}|^2},
\end{split}\label{problemtheta13333}
\end{equation}
respectively, where $\boldsymbol{\psi}_{DU}\triangleq \exp(j\boldsymbol{\phi}_{DU})$, $\boldsymbol{\phi}_{DU} \triangleq \angle\boldsymbol{\theta}_{DU}$, $\tilde{\boldsymbol{\psi}}_{DU}\triangleq \exp(j\tilde{\boldsymbol{\phi}}_{DU})$ and $\tilde{\boldsymbol{\phi}}_{DU} \triangleq \angle\tilde{\boldsymbol{\theta}}_{DU}$.
Since Algorithm \ref{tab:table1} can be easily modified  to solve problems \eqref{problemtheta12222} and \eqref{problemtheta13333} as well, the details are omitted  here.

\subsection{Complexity and Convergence of Algorithm 1}
The complexity of  Algorithm \ref{tab:table1} is dominated by the matrix multiplication operations required for updating $\boldsymbol{\phi}_{U}$ and $\boldsymbol{\phi}_{D}$. Thus, by omitting the lower order terms, the overall complexity of Algorithm \ref{tab:table1} is given by $\mathcal{O}(IN^2)$, where $I$ is the iteration number required by Algorithm \ref{tab:table1}.

As for the convergence property of Algorithm \ref{tab:table1}, it is  seen that each subproblem in the proposed BCD-based algorithm is globally and uniquely solved. Therefore, according to Proposition 2.7.1 (Convergence of Block Coordinate Descent) in  \cite{Bertsekas1999}, Algorithm \ref{tab:table1} is guaranteed to converge to a  Karush--Kuhn--Tucker  (KKT) point of problem \eqref{trans_prob}, i.e., a solution satisfying the KKT conditions of the problem. Furthermore, according to \cite{Shen2018TSP}, problem \eqref{trans_prob} and problem \eqref{eq_problemtheta2222} share the same KKT points. Therefore, Algorithm \ref{tab:table1} is guaranteed to converge to a KKT point of problem \eqref{eq_problemtheta2222}, which is also a KKT point of problem \eqref{problemtheta1}.

\section{ Simulation results}
\label{Section5:simulations}

In this section, we validate our  analysis and evaluate the performance of the proposed algorithm based on simulations.
The path loss is modeled as $L = C_0(d_{\textrm{link}}/D_0)^{-\alpha}$, where $C_0$ is the path loss at the {reference} distance of $D_0=1$ m, $\alpha$ is the path-loss exponent and $d_{\textrm{link}}$ represents the link distance. We denote the path-loss exponents of the AP-user (both UL and DL), AP-IRS, IRS-user  and user-user links as $\alpha_{DA}$, $\alpha_{IA}$ $\alpha_{DI}$ and $\alpha_{DU}$, respectively, and  set $\alpha_{DA}=3.6$ and $\alpha_{IA}=\alpha_{DI}=\alpha_{DU}=2.2$. In the simulations, we construct a 3-D coordinate system where the AP is located on the $x$-axis and the IRSs are placed in the planes parallel to the $y-z$ plane, as illustrated in Fig. \ref{user_setup}. The distance between the UL user and the DL user is represented by $d$. The reference antenna at AP is located at $(2+\frac{d}{2}\;\textrm{m}, 0, 0)$, the UL user is located at $(2\;\textrm{m}, 150\;\textrm{m}, 0)$, and the DL user is located at $(d+2\;\textrm{m}, 150\;\textrm{m}, 0)$. For Case 1, the reference reflecting elements at the two  IRSs are placed at $(0, 150\;\textrm{m}, 3\;\textrm{m})$ and  $(d+4 \;\textrm{m}, 150\;\textrm{m}, 3\;\textrm{m})$, respectively. For Case 2 and Case 3, the reference reflecting element at the combined IRS is placed at $(d+4\;\textrm{m}, 150\;\textrm{m}, 3\;\textrm{m})$ and $(0, 150\;\textrm{m}, 3\;\textrm{m})$, respectively.
\begin{figure*}[!t]
\centering
	\setlength{\abovecaptionskip}{-0.2cm}
\setlength{\belowcaptionskip}{-0.2cm}
\scalebox{0.47}{\includegraphics{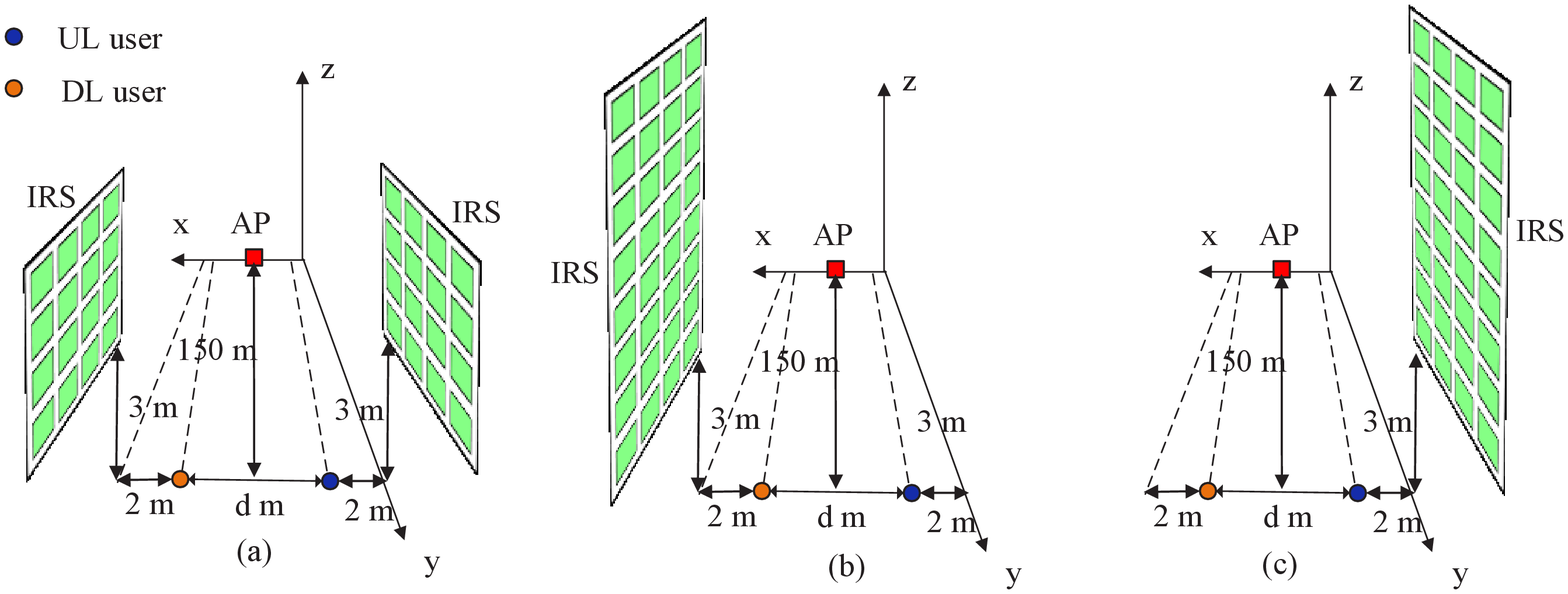}}
\caption{Simulation setup. (a) Case 1: Place IRSs on both user sides; (b) Case 2: IRS is placed  near the DL user only; (c) Case 3: IRS is placed  near the UL user only. }\label{user_setup}
\end{figure*}
Besides, we assume the Rician fading channel model for all links in our simulations since both line-of-sight (LoS) and non-LoS (NLoS) components may exist in practical channels. Accordingly, the AP-IRS  and  IRS-user channels are modeled as
\begin{equation}
	\mathbf{f}=\sqrt{\frac{\beta}{1+\beta}}\mathbf{f}^{\textrm{LoS}}+\sqrt{\frac{1}{1+\beta}}\mathbf{f}^{\textrm{NLoS}},
\end{equation}
where $\beta$ is the Rician factor, $\mathbf{f}^{\textrm{LoS}}$ and $\mathbf{f}^{\textrm{NLoS}}$ denote the deterministic LoS and NLoS components, respectively.
Here we denote the Rician factors of the AP-IRS  and IRS-user links as $\beta_{IA}$ and $\beta_{DI}$, respectively, and we set $\beta_{IA}=9$ dB and $\beta_{DI}=6$ dB in our simulations. $g$ is modeled similarly, with the Rician factor equal to  $4$ dB.  Without loss of generality, we set $\rho=0.5$ and $\kappa_U=\kappa_A=1$ in all of our simulations.

In the following, we first show  the advantage of using IRS over without using IRS by comparing the minimum transmit power required  in the FD and HD systems  under  different IRS  deployment strategies. Then, we compare the performance of the three IRS deployment strategies based on the FD system.

\subsection{FD versus HD Systems}
In this subsection, we aim to verify our analytical results  in Section \ref{subsectionIVA} by simulations. Thus, the IRS reflection coefficients for the considered three deployment strategies are set based on the CGM beamforming given  in Section \ref{Section3:analysis} to maximize the channel power  gains of the IRS-related links. Unless otherwise specified, we set $d = 80$ m and $\gamma_{A}=\gamma_{D}=4$ bits/s/Hz.

\begin{figure}[!t]
\centering
\scalebox{0.56}{\includegraphics{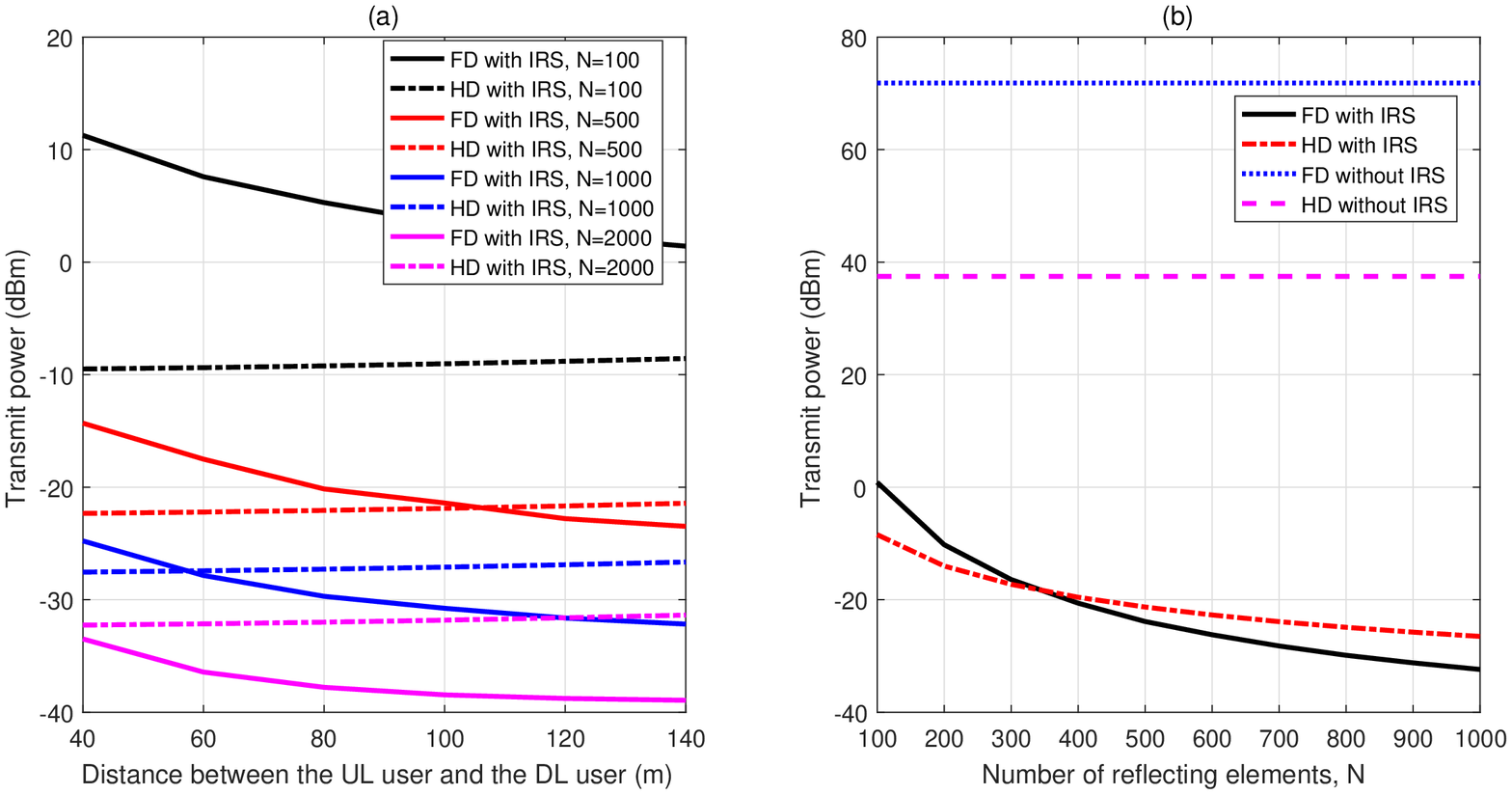}}
\caption{Minimum power consumption in Case 1: (a) Transmit power versus the distance between the UL and DL users; (b) Transmit power versus the total  number of IRS  reflecting elements.}\label{Fig1}
\end{figure}

First, consider the performance  comparison between the FD and HD systems in Case 1. 
Fig. \ref{Fig1} shows the minimum total transmit power consumption (for both the AP and UL user)  versus the distance between the UL and DL users, $d$, as well as   the total  number of IRS reflecting elements, $N$. From Fig. \ref{Fig1} (a), we can see that the power consumption of the FD system decreases with the  increasing of $d$ while the power consumption of the HD system gradually increases with the  increasing of $d$. This is because the UL-to-DL interference in general weakens as $d$ increases, which is beneficial for improving the performance of the FD system; on the other hand, larger $d$ implies longer distance between the AP and users and more severe path losses of the AP-user links, therefore although there is no UL-to-DL interference in the HD system, its performance could be impaired as $d$ increases. 
 Moreover, it is observed that the transmit powers of all considered systems decrease with the increasing of $N$, which is expected since larger $N$ leads to higher passive beamforming gain.
Fig. \ref{Fig1} (b) illustrates the relationship between the transmit power and the number of reflecting elements at the IRSs for both FD and HD systems. 
 From the results, we can see that the transmit power of the FD system is larger than that of the HD system when there is no IRS or $N$ is small; however the transmit power of the FD system becomes much smaller than that of the HD system  with the increasing of $N$, which verifies the results in Theorem \ref{theorem_1}. 

\begin{figure}[!t]
\centering
\scalebox{0.56}{\includegraphics{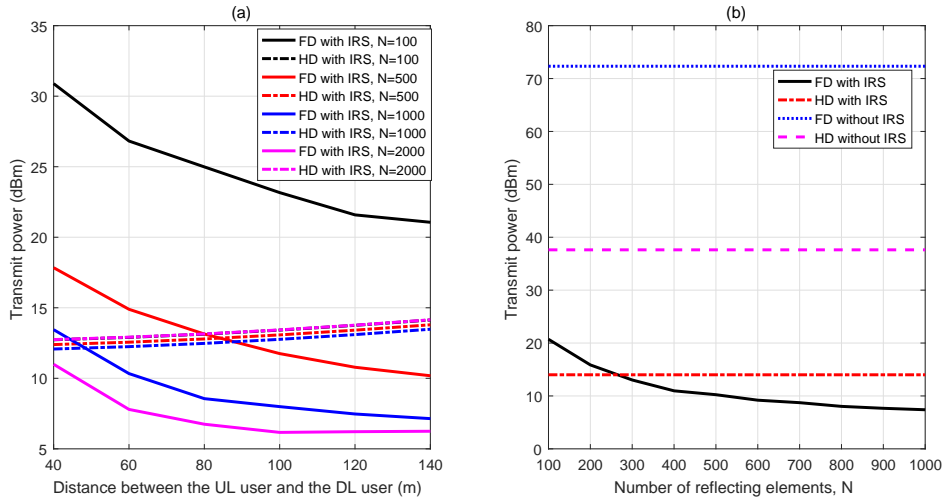}}
\caption{Minimum power consumption in Case 2: (a) Transmit power versus the distance between the UL  and  DL users; (b) Transmit power versus the total number of IRS reflecting elements.}\label{Fig2}
\end{figure}

\begin{figure}[!t]
\centering
\scalebox{0.56}{\includegraphics{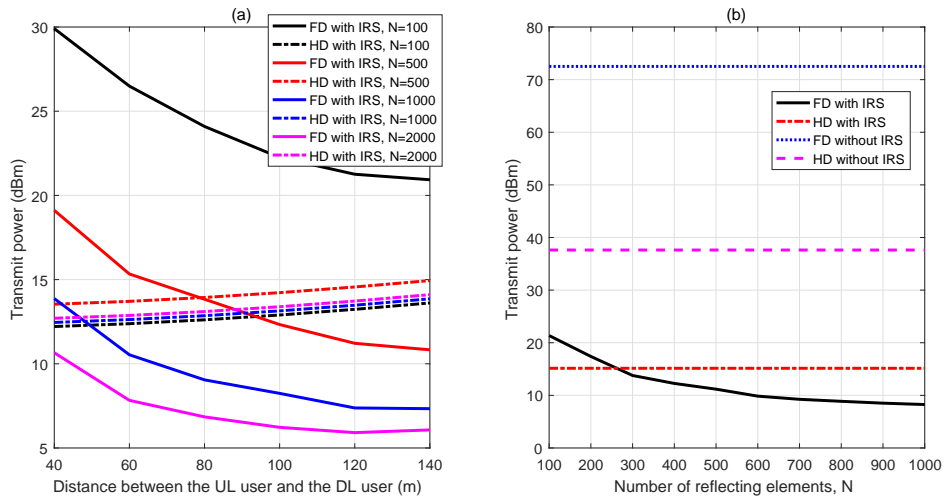}}
\caption{Minimum power consumption in Case 3: (a) Transmit power versus the distance between the UL  and  DL users; (b) Transmit power versus the total number of IRS reflecting elements.}\label{Fig3}
\end{figure}

Then, we compare the transmit power consumption in the  FD and HD systems for Case 2 and Case 3.
Fig. \ref{Fig2} (a) and   (b) illustrate the transmit power performance versus the distance $d$ and the number of IRS reflecting elements  in Case 2, respectively, and
the same comparison is shown in Fig. \ref{Fig3} (a) and  (b) for Case 3. 
Similar to Case 1, from Fig. \ref{Fig2} (a)  and Fig. \ref{Fig3} (a), the transmit power of the FD system decreases while that of the HD system slightly increases with $d$ in both Case 2 and Case 3.
The reason for this phenomenon is the same as the aforementioned one for Case 1. 
In Fig. \ref{Fig2} (b)  and Fig. \ref{Fig3} (b), the trend of the curves regarding the HD  system with IRS  does not decrease much 
 with the increasing  number of reflecting elements,  which is  different from its counterpart in Case 1. 
This  can be explained by the fact that  with only one large IRS deployed near the DL (or UL) user, the power consumption required for achieving the rate requirement of the UL  (DL) user cannot be reduced, 
thus the total transmit power consumption  in Case 2 (Case 3) is dominated by the power  required by the UL (DL) transmission  which remains unchanged under different values of $N$. 
 From the above  results,  we  can see  that  the FD and HD systems with IRS achieve much lower transmit power consumption  as compared to those without IRS. Furthermore, the transmit power of the FD system becomes much smaller than that of the HD system with the increasing of $N$.
The  results thus verify the discussions in Section \ref{subsectionIVA} for Cases 2 and  3.

\subsection{Comparison of Different IRS Deployment Strategies  in FD System}

In this subsection, we evaluate the performance of our proposed BCD-based passive beamforming algorithm (i.e., Algorithm \ref{tab:table1}) in the FD system and compare the performance of different IRS deployment strategies. Unless otherwise specified, we assume $\gamma_{A}=\gamma_{D}=4$ bits/s/Hz, $N=2000$ and $d=40$ m for all cases.

\begin{figure}[!t]
\centering
\scalebox{0.56}{\includegraphics{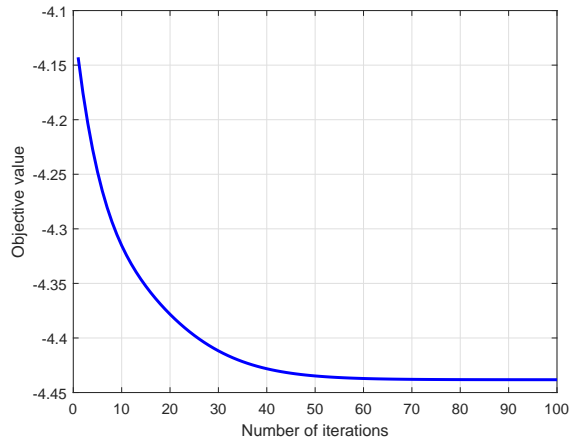}}
\caption{Convergence  behavior of the proposed BCD-based passive beamforming algorithm.}\label{Fig4}
\end{figure}

We first investigate the convergence behavior of our proposed BCD-based algorithm. Fig. \ref{Fig4} shows the achieved objective value versus the number of iterations in Case 1.
The phase shifts are randomly initialized within $[0,2\pi)$. It is observed that the transmit power achieved by the proposed algorithm decreases monotonically with the number of iterations and the proposed algorithm can converge  within 60 iterations. Moreover, the proposed algorithm has a  low complexity due to the closed-form solution (multiplication operation only) of each step shown in Algorithm \ref{tab:table1}.

Next, we study in Fig. \ref{Fig5} the effect of the rate requirement of the UL user, i.e.,  $\gamma_{A}$, on the transmit power consumption, under the  three IRS deployment strategies. 
We can see from this figure that the transmit power  of Case 1 is significantly lower than that of Case 2 or Case 3, which shows the advantage of distributed deployment over centralized deployment under the considered FD system. Besides, we observe that the transmit power in Case 3 increases slower than that in Case 2 with the increase of $\gamma_{A}$. This is because in Case 3, the IRS is placed near the UL user and thus the rate requirement of the UL user can be met easier.
Moreover, we can see that the proposed BCD-based passive beamforming algorithm achieves better performance compared to the CGM passive beamforming algorithm, especially when $\gamma_A$ is large. This is mainly because the UL-to-DL interference becomes the performance bottleneck in the high-$\gamma_A$ regime and simply employing the CGM based beamforming cannot effectively suppress this interference.
\begin{figure}[!t]
\centering
\scalebox{0.56}{\includegraphics{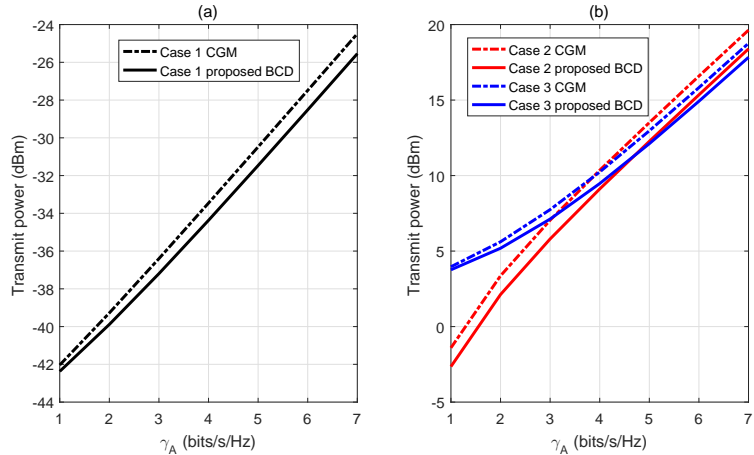}}
\caption{ Transmit power versus $\gamma_{A}$ for the proposed BCD-based algorithm and the CGM algorithm: (a) Case 1; (b) Case 2 and Case 3.}\label{Fig5}
\end{figure}
Similarly, we explore in Fig. \ref{Fig6} the effect of the rate requirement of the DL user, i.e.,  $\gamma_{D}$, on the transmit power consumption. We can see that Case 1 also outperforms Case 2 and Case 3 under  this setup. Besides, the transmit power in Case 2 increases slower than that in Case 3 as $\gamma_{D}$ increases since the rate requirement of the DL user can be more easily satisfied in this case.

\begin{figure}[!t]
\centering
\scalebox{0.56}{\includegraphics{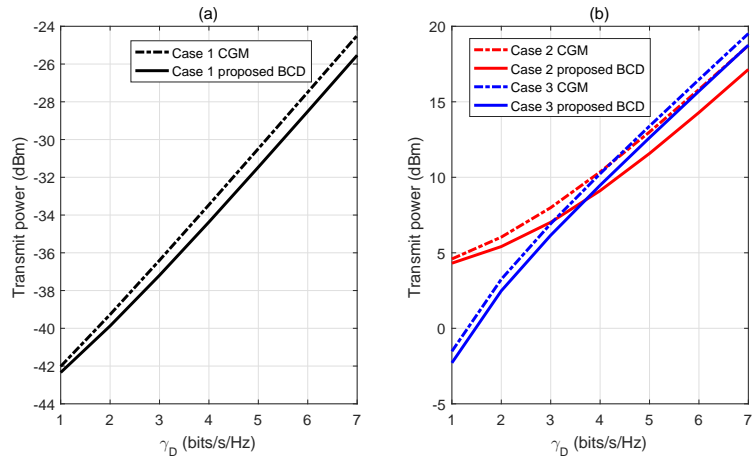}}
\caption{Transmit power versus $\gamma_{D}$ for the proposed BCD-based algorithm and the CGM algorithm: (a) Case 1; (b) Case 2 and Case 3.}\label{Fig6}
\end{figure}

\begin{figure}[!t]
\centering
\scalebox{0.56}{\includegraphics{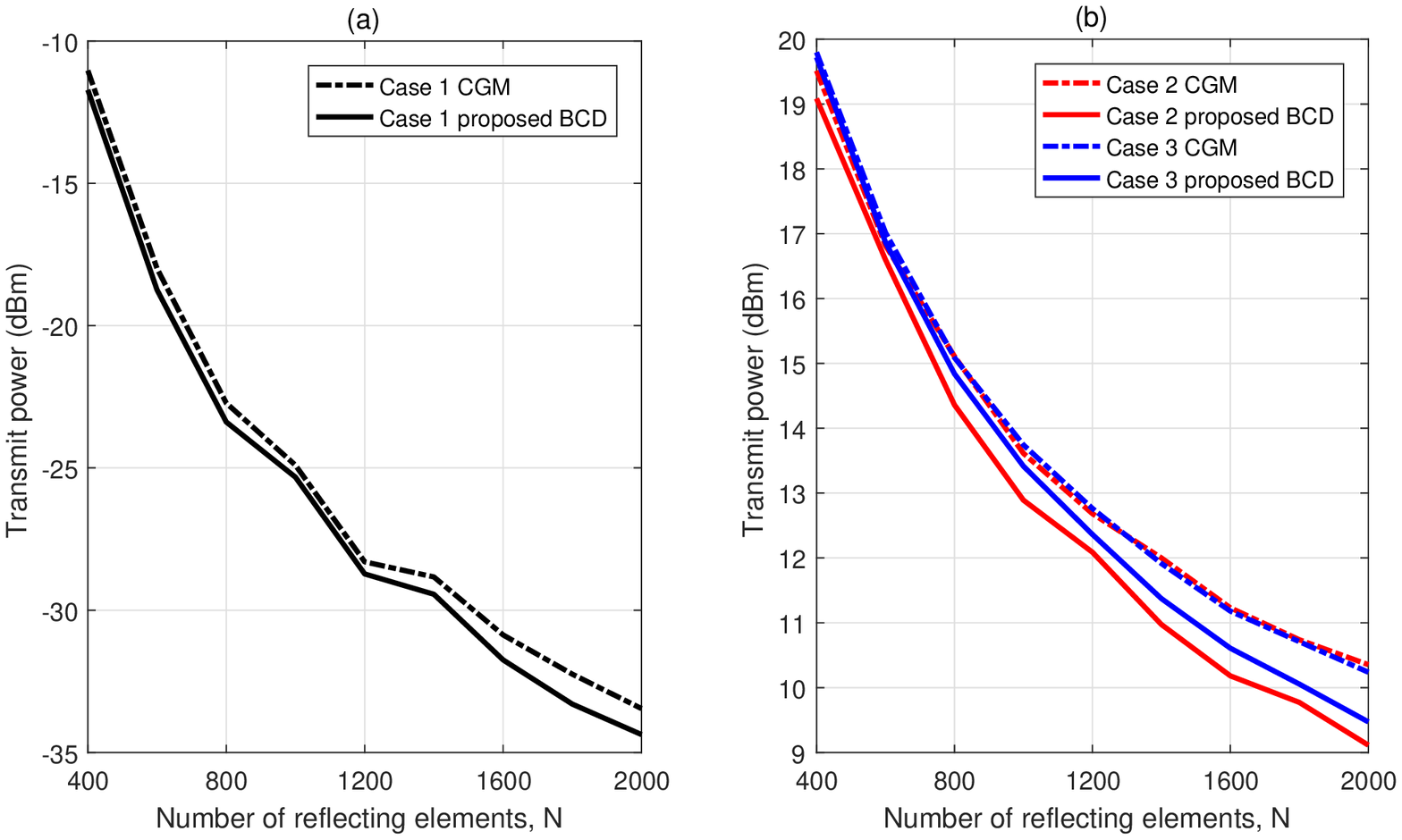}}
\caption{Transmit power versus the number of IRS reflecting elements  for the proposed BCD-based algorithm and the CGM algorithm: (a) Case 1; (b) Case 2 and Case 3.}\label{Fig7}
\end{figure}

Finally, in Fig. \ref{Fig7}, we investigate the effect of the number of IRS reflecting elements  on the transmit power  performance for the three IRS deployment strategies. It can be seen that the performance of Case 1 is the best among all considered. Besides, the distributed deployment strategy in Case 1 also provides the largest performance improvement when $N$ is increased from $400$ to $2000$, e.g., the transmit power when $N=2000$ is about  $20$ dBm lower than that when $N=400$ in Case 1, but this gain  is only around  $10$ dBm in Cases 2 and 3. These results coincide with the analysis in  Section \ref{subsectionIVB}.
Furthermore, we can see that the proposed BCD-based algorithm still outperforms the CGM algorithm and the gain increases with the number of reflecting elements. This is because larger $N$ offers more flexibility when designing the reflection coefficients and a more significant  tradeoff between channel gain and UL-to-DL interference can be achieved by the proposed BCD-based algorithm, while only channel gain maximization is considered in the CGM algorithm.

\section{Conclusions and Future Work} 
\label{Section6:conclusion}
In this paper, we have studied three deployment  cases for an IRS-aided FD wireless system. In each case, the weighted sum transmit power consumption of the AP and UL user is minimized  by jointly optimizing their transmit power  and the passive reflection coefficients at the IRS (or IRSs), subject to the UL and DL users' rate constraints and the uni-modulus constraints on the IRS reflection coefficients.
We have analyzed the minimum transmit power in the IRS-aided FD system under the three deployment schemes, as compared with that of the corresponding  HD system. 
Specifically, we have showed that the FD system outperforms its HD counterpart for all IRS deployment schemes, while the distributed deployment further outperforms the other two centralized deployment schemes. 
Moreover,  we have developed an efficient BCD-based algorithm  for passive beamforming design in a general system setup. 
Finally, numerical results have been presented to verify our analysis and the efficacy of the  proposed passive beamforming design algorithm.

In the following, we briefly discuss some important issues/aspects of this work that are not addressed yet, so as to motivate  future research.
\begin{itemize}
	\item For simplicity, we only considered the single UL/DL user case and the two users are assumed to be equipped with a single antenna each. Besides, the FD AP is also assumed to have a single transmit/receive antenna. Generally, there may exist multiple UL/DL users, and the AP as well as users are usually  equipped with multiple antennas for high-rate  communications, therefore the joint design of active beamforming at the AP/users and passive beamforming at the IRS is an important problem to investigate in the future.
\item In this paper, we considered continuous phase shifts at the reflecting elements of the IRS, which may be practically difficult to implement due to the hardware limitation. Besides, the reflection amplitudes at the IRS were assumed to be 1 for simplicity and the potential  joint optimization of  reflection amplitudes and phase shifts  \cite{Zhao2020intelligent} was not exploited. Thus, beamforming optimization with discrete phase shifts at the IRS and joint reflection amplitude and phase-shift optimization are interesting topics, which are worthy of further investigation.
\item In practice, the SI due to the FD operation cannot be perfectly canceled in general and CSI errors are inevitable in practice due to limited channel training resources. Therefore, beamforming optimization under more practice channel model is also an important problem to investigate in future work.
\end{itemize}
\begin{appendices}

\section{Proof for Theorem \ref{theorem_1}} \label{proof_of_The1}
By comparing $L_1(\boldsymbol{\theta}^{*}_{U}, \boldsymbol{\theta}^{*}_{D})$ and $\tilde{L}_1(\boldsymbol{\theta}^{**}_{U}, \boldsymbol{\theta}^{**}_{D})$,
we obtain
\begin{equation}
	\begin{split}
		&\tilde{L}_1(\boldsymbol{\theta}^{**}_{U}, \boldsymbol{\theta}^{**}_{D})-L_1(\boldsymbol{\theta}^{*}_{U}, \boldsymbol{\theta}^{*}_{D}) \\& =
		\frac{\kappa_{U}(2^{\gamma_{A}}+1)(2^{\gamma_{A}}-1)\sigma^2_{A}}{2\lambda_{U}(\boldsymbol{\theta}^{**}_{U})}+\frac{\kappa_{A}(2^{\gamma_{D}}+1)(2^{\gamma_{D}}-1)\sigma^2_{D}}{2\lambda_{D}(\boldsymbol{\theta}^{**}_{D})}
		\\&\quad
		-\left( \frac{\bar{\gamma}_3}{\lambda_{U}(\boldsymbol{\theta}^{*}_{U})}+\frac{\bar{\gamma}_2}{\lambda_{D}(\boldsymbol{\theta}^{*}_{D})}+
		\frac{\bar{\gamma}_1\lambda_{DU}(\boldsymbol{\theta}^*_{U}, \boldsymbol{\theta}^*_{D} )}{\lambda_D(\boldsymbol{\theta}^*_{D})\lambda_{U}(\boldsymbol{\theta}^*_{U})}\right)
		\\&  \geq 
		\tilde{L}_1(\boldsymbol{\theta}^{**}_{U}, \boldsymbol{\theta}^{**}_{D})-L_1(\boldsymbol{\theta}^{**}_{U}, \boldsymbol{\theta}^{**}_{D})
		\\&=\frac{\kappa_{U}(2^{\gamma_{A}}-1)^2\sigma^2_{A}}{2\lambda_{U}(\boldsymbol{\theta}^{**}_{U})}+\frac{\kappa_{A}(2^{\gamma_{D}}-1)^2\sigma^2_{D}}{2\lambda_{D}(\boldsymbol{\theta}^{**}_{D})}-
		\frac{\bar{\gamma}_1\lambda_{DU}(\boldsymbol{\theta}^{**}_{U}, \boldsymbol{\theta}^{**}_{D} )}{\lambda_{D}(\boldsymbol{\theta}^{**}_{D})\lambda_{U}(\boldsymbol{\theta}^{**}_{U})}, 
	\end{split}
\end{equation}
where $\lambda_{DU}(\boldsymbol{\theta}^{**}_{U}, \boldsymbol{\theta}^{**}_{D} ) \triangleq |g+ \mathbf{f}^H_{DU}\boldsymbol{\theta}^{**}_{U}+\mathbf{g}^H_{DU}\boldsymbol{\theta}^{**}_{D}|^2$ denotes the channel power  gain between the UL user and the DL user
with  the CGM beamforming and
the inequality  holds since the CGM beamforming vectors $\boldsymbol{\theta}^{**}_{U}$ and $\boldsymbol{\theta}^{**}_{D}$ shown in \eqref{eq:suboptimal1111}  are suboptimal solutions of problem \eqref{eq:case3} in general and
$L_1(\boldsymbol{\theta}^{**}_{U}, \boldsymbol{\theta}^{**}_{D})\geq L_1(\boldsymbol{\theta}^{*}_{U}, \boldsymbol{\theta}^{*}_{D})$. 
Note that in light of the Cauchy-Schwarz inequality\footnote{Our result also holds  by the fact that for two complex
	numbers $x$ and $y$, the following inequality holds $||x|-|y||\leq |x+y|\leq |x|+|y|$.}, we can obtain
\begin{equation}
	\begin{split}
		\lambda_{DU}(\boldsymbol{\theta}^{**}_{U}, \boldsymbol{\theta}^{**}_{D} )
		\leq (|g|+|\mathbf{f}^H_{DU}\boldsymbol{\theta}^{**}_{U}|+|\mathbf{g}^H_{DU}\boldsymbol{\theta}^{**}_{D}|)^2\leq 3(|g|^2+|\mathbf{f}^H_{DU}\boldsymbol{\theta}^{**}_{U}|^2+|\mathbf{g}^H_{DU}\boldsymbol{\theta}^{**}_{D}|^2).
	\end{split}
\end{equation}
Thus,  we further have
\begin{equation}
	\begin{split}
		&\tilde{L}_1(\boldsymbol{\theta}^{**}_{U}, \boldsymbol{\theta}^{**}_{D})-L_1(\boldsymbol{\theta}^{*}_{U}, \boldsymbol{\theta}^{*}_{D})
		\\& \geq 
		\frac{\kappa_{U}(2^{\gamma_{A}}-1)^2\sigma^2_{A}}{2\lambda_{U}(\boldsymbol{\theta}^{**}_{U})}+\frac{\kappa_{A}(2^{\gamma_{D}}-1)^2\sigma^2_{D}}{2\lambda_{D}(\boldsymbol{\theta}^{**}_{D})}-
		\frac{3\bar{\gamma}_1(|g|^2+|\mathbf{f}^H_{DU}\boldsymbol{\theta}^{**}_{U}|^2+|\mathbf{g}^H_{DU}\boldsymbol{\theta}^{**}_{D}|^2)         }{\lambda_{D}(\boldsymbol{\theta}^{**}_D)\lambda_{U}(\boldsymbol{\theta}^{**}_{U})}.\label{eq:differencecase3appendix}
	\end{split}
\end{equation}
By multiplying $\frac{\lambda_{D}(\boldsymbol{\theta}^{**}_{D})\lambda_{U}(\boldsymbol{\theta}^{**}_{U})}{3(2^{\gamma_{D}}-1)\sigma^2_{A}(2^{\gamma_{A}}-1)N^2}>0$ on both sides of \eqref{eq:differencecase3appendix}, it yields 
\begin{equation}
	\begin{split}
		&\frac{\lambda_{D}(\boldsymbol{\theta}^{**}_{D})\lambda_{U}(\boldsymbol{\theta}^{**}_{U})(\tilde{L}_1(\boldsymbol{\theta}^{**}_{U}, \boldsymbol{\theta}^{**}_{D})-L_1(\boldsymbol{\theta}^{*}_{U}, \boldsymbol{\theta}^{*}_{D}))}{3(2^{\gamma_{D}}-1)\sigma^2_{A}(2^{\gamma_{A}}-1)N^2} \\& \geq
		\frac{\kappa_{U}\lambda_{D}(\boldsymbol{\theta}^{**}_{D})(2^{\gamma_{A}}-1)}{6(2^{\gamma_{D}}-1)N^2}+\frac{\kappa_{A}\lambda_{U}(\boldsymbol{\theta}^{**}_{U})(2^{\gamma_{D}}-1)\sigma^2_{D}}{6(2^{\gamma_{A}}-1)\sigma^2_{A}N^2}
		-\frac{\kappa_{A}(|g|^2+|\mathbf{f}^H_{DU}\boldsymbol{\theta}^{**}_{U}|^2+|\mathbf{g}^H_{DU}\boldsymbol{\theta}^{**}_{D}|^2)}{N^2}
		\\& \geq
		\frac{\kappa_{U}(|h_{DA}|^2+|\mathbf{g}^H_{DA}\boldsymbol{\theta}^{**}_{D}|^2-2|h_{DA}||\mathbf{g}^H_{DA}\boldsymbol{\theta}^{**}_{D}|)(2^{\gamma_{A}}-1)}{6(2^{\gamma_{D}}-1)N^2} 
		\\&\quad+\frac{\kappa_{A}(|h_{AU}|^2+|\mathbf{f}^H_{AU}\boldsymbol{\theta}^{**}_{U}|^2-2|h_{AU}||\mathbf{f}^H_{AU}\boldsymbol{\theta}^{**}_{U}|)(2^{\gamma_{D}}-1)\sigma^2_{D}}{6(2^{\gamma_{A}}-1)\sigma^2_{A}N^2}
		\\&\quad-\frac{\kappa_{A}(|g|^2+|\mathbf{f}^H_{DU}\boldsymbol{\theta}^{**}_{U}|^2+|\mathbf{g}^H_{DU}\boldsymbol{\theta}^{**}_{D}|^2)}{N^2}=U(\boldsymbol{\theta}^{**}_{U}, \boldsymbol{\theta}^{**}_{D})-\frac{\kappa_{A}|g|^2}{N^2},
	\end{split}\label{eq:64}
\end{equation}
where the second inequality  holds   due to the fact that
\begin{equation}
	\lambda_{U}(\boldsymbol{\theta}^{**}_{U})\triangleq|h_{AU}+\mathbf{f}^H_{AU}\boldsymbol{\theta}^{**}_{U}|^2\geq(|h_{AU}|-|\mathbf{f}^H_{AU}\boldsymbol{\theta}^{**}_{U}|)^2=|h_{AU}|^2+|\mathbf{f}^H_{AU}\boldsymbol{\theta}^{**}_{U}|^2-2|h_{AU}||\mathbf{f}^H_{AU}\boldsymbol{\theta}^{**}_{U}|,
\end{equation}
\begin{equation}
	\lambda_{D}(\boldsymbol{\theta}^{**}_{D})\triangleq|h_{DA}+\mathbf{g}^H_{DA}\boldsymbol{\theta}^{**}_{D}|^2\geq(|h_{DA}|-|\mathbf{g}^H_{DA}\boldsymbol{\theta}^{**}_{D}|)^2=|h_{DA}|^2+|\mathbf{g}^H_{DA}\boldsymbol{\theta}^{**}_{D}|^2-2|h_{DA}||\mathbf{g}^H_{DA}\boldsymbol{\theta}^{**}_{D}|.
\end{equation}
Based on \eqref{eq:64},  we can guarantee $\tilde{L}_1(\boldsymbol{\theta}^{**}_{U}, \boldsymbol{\theta}^{**}_{D})\geq L_1(\boldsymbol{\theta}^{*}_{U}, \boldsymbol{\theta}^{*}_{D})$ if \eqref{conditionanalysis2} is satisfied.

Moreover,
assuming  Rayleigh fading channels for all IRS-related links,
due to \eqref{eq:suboptimal1111}, we have $|\mathbf{g}^H_{DA}\boldsymbol{\theta}^{**}_{D}|=\sum^{(1-\rho)N}_{n=1}|g_{IA,n}||g_{DI,n}|$ and $|\mathbf{f}^H_{AU}\boldsymbol{\theta}^{**}_{U}|=\sum^{\rho N}_{n=1}|f_{IU,n}||f_{AI,n}|$, where
$f_{IU,n}$ and $f_{AI,n}$ denote the $n$-th elements  of $\mathbf{f}_{IU}$ and $\mathbf{f}_{AI}$, respectively,
and $g_{IA,n}$ and $g_{DI,n}$ denote the $n$-th elements  of $\mathbf{g}_{IA}$ and $\mathbf{g}_{DI}$, respectively.
Since $|f_{IU,n}|$, $|f_{AI,n}|$, $|g_{IA,n}|$ and $|g_{DI,n}|$ are statistically independent and follow Rayleigh distribution with mean values
$\frac{\sqrt{\pi}\varrho_{f_{IU}}}{2}$, $\frac{\sqrt{\pi}\varrho_{f_{AI}}}{2}$, $\frac{\sqrt{\pi}\varrho_{g_{IA}}}{2}$, and $\frac{\sqrt{\pi}\varrho_{g_{DI}}}{2}$, respectively, we have $\mathbb{E}(|f_{IU,n}||f_{AI,n}|)=\frac{\pi\varrho_{f_{IU}}\varrho_{f_{AI}}}{4}$ and $\mathbb{E}(|g_{IA,n}||g_{DI,n}|)=\frac{\pi\varrho_{g_{IA}}\varrho_{g_{DI}}}{4}$.
When $N\rightarrow \infty$, we obtain
\begin{equation}
	\left|\frac{\mathbf{g}^H_{DA}\boldsymbol{\theta}^{**}_{D}}{(1-\rho)N}\right|=\frac{\sum^{(1-\rho)N}_{n=1}|f_{IU,n}||f_{AI,n}|}{(1-\rho)N}\rightarrow \frac{\pi\varrho_{f_{IU}}\varrho_{f_{AI}}}{4},
\end{equation}
\begin{equation}
	\left|\frac{\mathbf{f}^H_{AU}\boldsymbol{\theta}^{**}_{U}}{\rho N}\right|=\frac{\sum^{\rho N}_{n=1}|g_{IA,n}||g_{DI,n}|}{\rho N}\rightarrow \frac{\pi\varrho_{g_{IA}}\varrho_{g_{DI}}}{4}.
\end{equation}
Similarly, we  have
\begin{equation}
	|\frac{\mathbf{g}^H_{DA}\boldsymbol{\theta}^{**}_{D}}{N}|^2\rightarrow \frac{\pi^2\varrho^{2}_{f_{IU}}\varrho^{2}_{f_{AI}}(1-\rho)^2}{16}, \, \, 
	|\frac{\mathbf{f}^H_{AU}\boldsymbol{\theta}^{**}_{U}}{N}|^2\rightarrow \frac{\pi^2\varrho^{2}_{g_{IA}}\varrho^{2}_{g_{DI}}\rho^2}{16}.
\end{equation}
By leveraging the Lindeberg-L\'evy central limit theorem \cite{Probabilitydis} and the conclusion in \cite{QwuTWC2019},  we have $\mathbf{f}^H_{DU}\boldsymbol{\theta}^{**}_{U} \sim \mathcal{CN}(0, \rho N\varrho^2_{f_{IU}}\varrho^2_{f_{DI}})$ and $\mathbf{g}^H_{DU}\boldsymbol{\theta}^{**}_{D}\sim \mathcal{CN}(0, (1-\rho)N\varrho^2_{g_{IA}}\varrho^2_{g_{IU}})$ as $N\rightarrow \infty$,
and \\ $\frac{|\mathbf{f}^H_{DU}\boldsymbol{\theta}^{**}_{U}|^2+|\mathbf{g}^H_{DU}\boldsymbol{\theta}^{**}_{D}|^2}{N}\rightarrow \rho\varrho^2_{f_{IU}}\varrho^2_{f_{DI}}+(1-\rho)\varrho^2_{g_{IA}}\varrho^2_{g_{IU}}$. Thus, we obtain that when $N$ becomes asymptotically large, $U(\boldsymbol{\theta}^{**}_{U}, \boldsymbol{\theta}^{**}_{D})$ in this condition satisfies \eqref{condition22}. 
This thus  completes the proof.

\section{Proof for Theorem \ref{theorem_2}}
\label{appendixA}
By comparing $L_2(\boldsymbol{\theta}^*_{DU})$ and $\tilde{L}_2(\boldsymbol{\theta}^{**}_{DU})$, we obtain
\begin{equation}
\begin{split}
&\tilde{L}_2(\boldsymbol{\theta}^{**}_{DU})-L_2(\boldsymbol{\theta}^*_{DU})\\& =
\frac{\kappa_{U}(2^{\gamma_{A}}+1)(2^{\gamma_{A}}-1)\sigma^2_{A}}{2|h_{AU}|^2}+\frac{\kappa_{A}(2^{\gamma_{D}}+1)(2^{\gamma_{D}}-1)\sigma^2_{D}}{2\omega_{D}(\boldsymbol{\theta}^{**}_{DU})}\\&\quad-\left( \frac{\bar{\gamma}_3}{|h_{AU}|^2}+\frac{\bar{\gamma}_2}{\omega_{D}(\boldsymbol{\theta}^{*}_{DU})}+
\frac{\bar{\gamma}_1\omega_{DU}(\boldsymbol{\theta}^{*}_{DU})}{|h_{AU}|^2\omega_{D}(\boldsymbol{\theta}^{*}_{DU})}\right)
\\& \geq  \tilde{L}_2(\boldsymbol{\theta}^{**}_{DU})-L_2(\boldsymbol{\theta}^{**}_{DU})
\\&=\frac{\kappa_{U}(2^{\gamma_{A}}-1)^2\sigma^2_{A}}{2|h_{AU}|^2}+\frac{\kappa_{A}(2^{\gamma_{D}}-1)^2\sigma^2_{D}}{2\omega_{D}(\boldsymbol{\theta}^{**}_{DU})}-
\frac{\bar{\gamma}_1\omega_{DU}(\boldsymbol{\theta}^{**}_{DU})}{|h_{AU}|^2\omega_{D}(\boldsymbol{\theta}^{**}_{DU})}.
\end{split}
\end{equation}
Note that $\omega_{DU}(\boldsymbol{\theta}^{**}_{DU})\triangleq  |g+ \tilde{\mathbf{g}}^H_{DU}\boldsymbol{\theta}^{**}_{DU}|^2  \leq (|g|+|\tilde{\mathbf{g}}^H_{DU}\boldsymbol{\theta}^{**}_{DU}|)^2\leq 2(|g|^2+|\tilde{\mathbf{g}}^H_{DU}\boldsymbol{\theta}^{**}_{DU}|^2)$.
Thus, we have
\begin{equation}
\begin{split}
\tilde{L}_2(\boldsymbol{\theta}^{**}_{DU})-L_2(\boldsymbol{\theta}^*_{DU}) & \geq
\frac{\kappa_{U}(2^{\gamma_{A}}-1)^2\sigma^2_{A}}{2|h_{AU}|^2}+\frac{\kappa_{A}(2^{\gamma_{D}}-1)^2\sigma^2_{D}}{2\omega_{D}(\boldsymbol{\theta}^{**}_{DU})}-
\frac{2\bar{\gamma}_1(|g|^2+|\tilde{\mathbf{g}}^H_{DU}\boldsymbol{\theta}^{**}_{DU}|^2)}{|h_{AU}|^2\omega_{D}(\boldsymbol{\theta}^{**}_{DU})}.\label{eq:differentcase1}
\end{split}
\end{equation}
By multiplying $\frac{|h_{AU}|^2\omega_{D}(\boldsymbol{\theta}^{**}_{DU})}{2(2^{\gamma_{D}}-1)\sigma^2_{A}(2^{\gamma_{A}}-1)N^2}>0$ on both sides of \eqref{eq:differentcase1} and due to
the fact that $\omega_{D}(\boldsymbol{\theta}^{**}_{DU})\triangleq|h_{DA}+\tilde{\mathbf{g}}^H_{DA}\boldsymbol{\theta}^{**}_{DU}|^2\geq(|h_{DA}|-|\tilde{\mathbf{g}}^H_{DA}\boldsymbol{\theta}^{**}_{DU}|)^2=|h_{DA}|^2+|\tilde{\mathbf{g}}^H_{DA}\boldsymbol{\theta}^{**}_{DU}|^2-2|h_{DA}||\tilde{\mathbf{g}}^H_{DA}\boldsymbol{\theta}^{**}_{DU}|$, we obtain
\begin{equation}
\begin{split}
&\frac{|h_{AU}|^2\omega_{D}(\boldsymbol{\theta}^{**}_{DU})(\tilde{L}_2(\boldsymbol{\theta}^{**}_{DU})-L_2(\boldsymbol{\theta}^*_{DU}))}{2(2^{\gamma_{D}}-1)\sigma^2_{A}(2^{\gamma_{A}}-1)N^2} \\& \geq
\frac{\kappa_{U}\omega_{D}(\boldsymbol{\theta}^{**}_{DU})(2^{\gamma_{A}}-1)}{4(2^{\gamma_{D}}-1)N^2}+\frac{\kappa_{A}|h_{AU}|^2(2^{\gamma_{D}}-1)\sigma^2_{D}}{4(2^{\gamma_{A}}-1)\sigma^2_{A} N^2}-
\frac{\kappa_{A}(|g|^2+|\tilde{\mathbf{g}}^H_{DU}\boldsymbol{\theta}^{**}_{DU}|^2)}{N^2}
\\& \geq
\frac{\kappa_{U}(|h_{DA}|^2+|\tilde{\mathbf{g}}^H_{DA}\boldsymbol{\theta}^{**}_{DU}|^2-2|h_{DA}||\tilde{\mathbf{g}}^H_{DA}\boldsymbol{\theta}^{**}_{DU}|)(2^{\gamma_{A}}-1)}{4(2^{\gamma_{D}}-1)N^2}
+\frac{\kappa_{A}|h_{AU}|^2(2^{\gamma_{D}}-1)\sigma^2_{D}}{4(2^{\gamma_{A}}-1)\sigma^2_{A} N^2}
\\&\quad-\frac{\kappa_{A}(|g|^2+|\tilde{\mathbf{g}}^H_{DU}\boldsymbol{\theta}^{**}_{DU}|^2)}{N^2}=U(\boldsymbol{\theta}^{**}_{DU})-\frac{\kappa_{A}|g|^2}{N^2}.
\end{split}
\end{equation}
Therefore, we can guarantee that the power consumption of the FD system is lower
than that of the HD system in Case 2, i.e., $\tilde{L}_2(\boldsymbol{\theta}^{**}_{DU})\geq L_2(\boldsymbol{\theta}^*_{DU})$,  if the  condition \eqref{conditionanalysis2case2} is satisfied.

Assuming  Rayleigh fading channels for all IRS-related links,
when $N\rightarrow \infty$, similarly by leveraging the Lindeberg-L\'evy central limit theorem \cite{Probabilitydis},  we have  $\left|\frac{\tilde{\mathbf{g}}^H_{DA}\boldsymbol{\theta}^{**}_{DU}}{N}\right|^2\rightarrow \frac{\pi^2\varrho^2_{\tilde{g}_{IA}}\varrho^2_{\tilde{g}_{DI}}}{16}$,
$\frac{|\tilde{\mathbf{g}}^H_{DA}\boldsymbol{\theta}^{**}_{DU}|}{N^2}\rightarrow 0$,
and $\left|\frac{\tilde{\mathbf{g}}^H_{DU}\boldsymbol{\theta}^{**}_{DU}}{N}\right|^2\rightarrow 0$.
Thus, we obtain that when $N$ becomes asymptotically  large,
 $U(\boldsymbol{\theta}^{**}_{DU})$ in this condition satisfies \eqref{condition2}. The proof is thus completed.

\section{Proof for Theorem \ref{Theorem_3}}
\label{appendixBB}

Let us rewrite the expression of the minimum power consumption of the FD system in Case 1 as
\begin{equation}
L_1(\boldsymbol{\theta}^*_{U}, \boldsymbol{\theta}^*_{D})=\frac{\bar{\gamma}_1\lambda_{DU}(\boldsymbol{\theta}^*_{U}, \boldsymbol{\theta}^*_{D})}{\lambda_{D}(\boldsymbol{\theta}^*_{D})\lambda_{U}(\boldsymbol{\theta}^*_{U})}+\frac{\bar{\gamma}_2}{\lambda_{D}(\boldsymbol{\theta}^*_{D})}+\frac{\bar{\gamma}_3}{\lambda_{U}(\boldsymbol{\theta}^*_{U})}.\label{eq:418}
\end{equation}
 Based on the Cauchy-Schwarz inequality, we have
 \begin{equation}
 \lambda_{D}(\boldsymbol{\theta}^{**}_{D})\triangleq |h_{DA}+\mathbf{g}^H_{DA}\boldsymbol{\theta}^{**}_{D}|^2\geq (|h_{DA}|-|\mathbf{g}^H_{DA}\boldsymbol{\theta}^{**}_{D}|)^2=|h_{DA}|^2+|\mathbf{g}^H_{DA}\boldsymbol{\theta}^{**}_{D}|^2-2|h_{DA}||\mathbf{g}^H_{DA}\boldsymbol{\theta}^{**}_{D}|,
 \end{equation}
 \begin{equation}
 \lambda_{U}(\boldsymbol{\theta}^{**}_{U})\triangleq |h_{AU}+\mathbf{f}^H_{AU}\boldsymbol{\theta}^{**}_{U}|^2\geq (|h_{AU}|-|\mathbf{f}^H_{AU}\boldsymbol{\theta}^{**}_{U}|)^2=|h_{AU}|^2+|\mathbf{f}^H_{AU}\boldsymbol{\theta}^{**}_{U}|^2-2|h_{AU}||\mathbf{f}^H_{AU} \boldsymbol{\theta}^{**}_{U}|.
 \end{equation}
 Then, with the suboptimal solutions $\boldsymbol{\theta}^{**}_{U}$ and $\boldsymbol{\theta}^{**}_{D}$, we further obtain the following inequalities for the minimum power consumption:
\begin{equation}
\begin{split}
& L_1(\boldsymbol{\theta}^*_{U}, \boldsymbol{\theta}^*_{D}) \\& \leq \frac{\bar{\gamma}_1\lambda_{DU}(\boldsymbol{\theta}^{**}_{U}, \boldsymbol{\theta}^{**}_{D} )}{\lambda_{D}(\boldsymbol{\theta}^{**}_{D})\lambda_{U}(\boldsymbol{\theta}^{**}_{U})}+\frac{\bar{\gamma}_2}{\lambda_{D}(\boldsymbol{\theta}^{**}_{D})}+\frac{\bar{\gamma}_3}{\lambda_{U}(\boldsymbol{\theta}^{**}_{U})}
\\& \leq
\frac{\bar{\gamma}_1\lambda_{DU}(\boldsymbol{\theta}^{**}_{U}, \boldsymbol{\theta}^{**}_{D} )}{(|h_{AU}|-|\mathbf{f}^H_{AU}\boldsymbol{\theta}^{**}_{U}|)^2(|h_{DA}|-|\mathbf{g}^H_{DA}\boldsymbol{\theta}^{**}_{D}|)^2}+\frac{\bar{\gamma}_2}{(|h_{DA}|-|\mathbf{g}^H_{DA}\boldsymbol{\theta}^{**}_{D}|)^2}+\frac{\bar{\gamma}_3}{(|h_{AU}|-|\mathbf{f}^H_{AU}\boldsymbol{\theta}^{**}_{U}|)^2}
\\& \leq \tilde{U}(\boldsymbol{\theta}^{**}_{U}, \boldsymbol{\theta}^{**}_{D})
\\& \triangleq
\frac{3\bar{\gamma}_1(|g|^2+|\mathbf{f}^H_{DU}\boldsymbol{\theta}^{**}_{U}|^2+|\mathbf{g}^H_{DU}\boldsymbol{\theta}^{**}_{D}|^2)}{(|h_{AU}|^2+|\mathbf{f}^H_{AU}\boldsymbol{\theta}^{**}_{U}|^2-2|h_{AU}||\mathbf{f}^H_{AU} \boldsymbol{\theta}^{**}_{U}|)(|h_{DA}|^2+|\mathbf{g}^H_{DA}\boldsymbol{\theta}^{**}_{D}|^2-2|h_{DA}||\mathbf{g}^H_{DA}\boldsymbol{\theta}^{**}_{D}|)}\\&\quad+\frac{\bar{\gamma}_2}{|h_{DA}|^2+|\mathbf{g}^H_{DA}\boldsymbol{\theta}^{**}_{D}|^2-2|h_{DA}||\mathbf{g}^H_{DA}\boldsymbol{\theta}^{**}_{D}|}+\frac{\bar{\gamma}_3}{|h_{AU}|^2+|\mathbf{f}^H_{AU}\boldsymbol{\theta}^{**}_{U}|^2-2|h_{AU}||\mathbf{f}^H_{AU}\boldsymbol{\theta}^{**}_{U}|},
\end{split}
\label{powerconsumption3}
\end{equation}
where the last inequality holds due to the fact that
\begin{equation}
\begin{split}
\lambda_{DU}(\boldsymbol{\theta}^{**}_{U}, \boldsymbol{\theta}^{**}_{D} )&\triangleq |g+\mathbf{f}^H_{DU}\boldsymbol{\theta}^{**}_{U}+\mathbf{g}^H_{DU}\boldsymbol{\theta}^{**}_{D}|^2\leq (|g|+|\mathbf{f}^H_{DU}\boldsymbol{\theta}^{**}_{U}|+|\mathbf{g}^H_{DU}\boldsymbol{\theta}^{**}_{D}|)^2\\&\leq 3(|g|^2+|\mathbf{f}^H_{DU}\boldsymbol{\theta}^{**}_{U}|^2+|\mathbf{g}^H_{DU}\boldsymbol{\theta}^{**}_{D}|^2).
\end{split}
\end{equation}
Furthermore, let us rewrite  $\tilde{U}(\boldsymbol{\theta}^{**}_{U}, \boldsymbol{\theta}^{**}_{D})$ as
\begin{equation}
\begin{split}
&\tilde{U}(\boldsymbol{\theta}^{**}_{U}, \boldsymbol{\theta}^{**}_{D})\\&\triangleq
\frac{3\bar{\gamma}_1(|g|^2+N\frac{(|\mathbf{f}^H_{DU}\boldsymbol{\theta}^{**}_{U}|^2+|\mathbf{g}^H_{DU}\boldsymbol{\theta}^{**}_{D}|^2)}{N})}{(|h_{DA}|^2+N^2\frac{|\mathbf{g}^H_{DA}\boldsymbol{\theta}^{**}_{D}|^2}{N^2}-2N|h_{DA}|\frac{|\mathbf{g}^H_{DA}\boldsymbol{\theta}^{**}_{D}|}{N})(|h_{AU}|^2+N^2\frac{|\mathbf{f}^H_{AU}\boldsymbol{\theta}^{**}_{U}|^2}{N^2}-2N|h_{AU}|\frac{|\mathbf{f}^H_{AU} \boldsymbol{\theta}^{**}_{U}|}{N})}\\&\quad+\frac{\bar{\gamma}_2}{|h_{DA}|^2+N^2 \frac{|\mathbf{g}^H_{DA}\boldsymbol{\theta}^{**}_{D}|^2}{N^2}-2N|h_{DA}|\frac{|\mathbf{g}^H_{DA}\boldsymbol{\theta}^{**}_{D}|}{N}}+\frac{\bar{\gamma}_3}{|h_{AU}|^2+N^2 \frac{|\mathbf{f}^H_{AU}\boldsymbol{\theta}^{**}_{U}|^2}{N^2}-2N|h_{AU}|\frac{|\mathbf{f}^H_{AU}\boldsymbol{\theta}^{**}_{U}|}{N}}.
\end{split}
\end{equation}
Assuming Rayleigh fading channels for all IRS-related links, similarly when $N\rightarrow \infty$ we obtain
\begin{equation}
\begin{split}
&|\frac{\mathbf{g}^H_{DA}\boldsymbol{\theta}^{**}_{D}}{N}|\rightarrow \frac{\pi\varrho_{g_{IA}}\varrho_{g_{DI}}(1-\rho)}{4}, \, |\frac{\mathbf{g}^H_{DA}\boldsymbol{\theta}^{**}_{D}}{N}|^2\rightarrow  \frac{\pi^2\varrho^2_{g_{IA}}\varrho^2_{g_{DI}}(1-\rho)^2}{16}, \,
|\frac{\mathbf{f}^H_{AU}\boldsymbol{\theta}^{**}_{U}}{N}|\rightarrow \frac{\pi\varrho_{f_{IU}}\varrho_{f_{AI}}\rho}{4},  \\&
|\frac{\mathbf{f}^H_{AU}\boldsymbol{\theta}^{**}_{U}}{N}|^2\rightarrow  \frac{\pi^2\varrho^2_{f_{IU}}\varrho^2_{f_{AI}}\rho^2}{16}, \,
 \frac{|\mathbf{f}^H_{DU}\boldsymbol{\theta}^{**}_{U}|^2+|\mathbf{g}^H_{DU}\boldsymbol{\theta}^{**}_{D}|^2}{N}\rightarrow \rho\varrho^2_{f_{IU}}\varrho^2_{f_{DI}}+(1-\rho)\varrho^2_{g_{DI}}\varrho^2_{g_{IU}}.
\end{split}
\end{equation}
Therefore, when $N$ becomes asymptotically  large, the upper bound of $L_3(\boldsymbol{\theta}^*_{U}, \boldsymbol{\theta}^*_{D})$ satisfies
\begin{equation}
\begin{split}
\tilde{U}(\boldsymbol{\theta}^{**}_{U}, \boldsymbol{\theta}^{**}_{D})&\rightarrow
\frac{3\bar{\gamma}_1(|g|^2+N(\rho\varrho^2_{f_{IU}}\varrho^2_{f_{DI}}+(1-\rho)\varrho^2_{g_{DI}}\varrho^2_{g_{IU}}))}{(|h_{DA}|^2+\frac{\varpi^2_{g}}{16}N^2-\frac{|h_{DA}|\varpi_{g}}{2}N)(|h_{AU}|^2+\frac{\varpi^2_{f}}{16}N^2-\frac{|h_{AU}|\varpi_{f}}{2}N)}
\\&\quad+\frac{\bar{\gamma}_2}{|h_{DA}|^2+\frac{\varpi^2_{g}}{16}N^2-\frac{|h_{DA}|\varpi_{g}}{2}N} 
+\frac{\bar{\gamma}_3}{|h_{AU}|^2+\frac{\varpi^2_{f}}{16}N^2-\frac{|h_{AU}|\varpi_{f}}{2}N}\rightarrow 0,\label{powerconsumption4}
\end{split}
\end{equation}
where $\varpi_{g}\triangleq \pi\varrho_{g_{IA}}\varrho_{g_{DI}}(1-\rho)$ and $\varpi_{f}\triangleq \pi\varrho_{f_{IU}}\varrho_{f_{AI}}\rho$. Since the last two terms on the right hand side (RHS)  of \eqref{powerconsumption4} dominate in the magnitude of $\tilde{U}(\boldsymbol{\theta}^{**}_{U}, \boldsymbol{\theta}^{**}_{D})$,
we can see that $\tilde{U}(\boldsymbol{\theta}^{**}_{U}, \boldsymbol{\theta}^{**}_{D})$ decreases quadratically with the increasing of $N$. The proof is thus completed.

\section{ Proof of equivalence between  \eqref{eq_problemtheta2222} and \eqref{trans_prob}}
\label{appendixB}
By fixing the other variables, the optimum $v$ for minimizing \eqref{trans_prob} is given by
  \begin{equation}
    v=\frac{|h_{DA}+\mathbf{g}^H_{DA}\boldsymbol{\theta}_{D}||h_{AU}+\mathbf{f}^H_{AU}\boldsymbol{\theta}_{U}|}{\bar{\gamma}_1|g+\mathbf{f}^H_{DU}\boldsymbol{\theta}_{U}+\mathbf{g}^H_{DU}\boldsymbol{\theta}_{D}|^2 + \bar{\gamma}_2|h_{AU}+\mathbf{f}^H_{AU}\boldsymbol{\theta}_{U}|^2 + \bar{\gamma}_3|h_{DA}+\mathbf{g}^H_{DA}\boldsymbol{\theta}_{D}|^2}. \label{appendixCCCCC1}
    \end{equation} 
By substituting \eqref{appendixCCCCC1}
 into \eqref{trans_prob}, we have the following equivalent optimization problem:
 \begin{equation} 
\max_{\boldsymbol{\phi}_{U}, \boldsymbol{\phi}_{D}}\quad
\frac{|h_{DA}+\mathbf{g}^H_{DA}\boldsymbol{\psi}_{D}|^2|h_{AU}+\mathbf{f}^H_{AU}\boldsymbol{\psi}_{U}|^2}{\bar{\gamma}_1|g+\mathbf{f}^H_{DU}\boldsymbol{\psi}_{U}+\mathbf{g}^H_{DU}\boldsymbol{\psi}_{D}|^2 + \bar{\gamma}_2|h_{AU}+\mathbf{f}^H_{AU}\boldsymbol{\psi}_{U}|^2 + \bar{\gamma}_3|h_{DA}+\mathbf{g}^H_{DA}\boldsymbol{\psi}_{D}|^2}. 
\label{eq_problemtheta2222appendix}
\end{equation}
It is readily seen that \eqref{eq_problemtheta2222}  is equivalent to \eqref{trans_prob}. 
This thus completes the proof.
\end{appendices}

\bibliographystyle{IEEETran}
\bibliography{references}

\begin{thebibliography}{10}
\providecommand{\url}[1]{#1}
\csname url@samestyle\endcsname
\providecommand{\newblock}{\relax}
\providecommand{\bibinfo}[2]{#2}
\providecommand{\BIBentrySTDinterwordspacing}{\spaceskip=0pt\relax}
\providecommand{\BIBentryALTinterwordstretchfactor}{4}
\providecommand{\BIBentryALTinterwordspacing}{\spaceskip=\fontdimen2\font plus
\BIBentryALTinterwordstretchfactor\fontdimen3\font minus
  \fontdimen4\font\relax}
\providecommand{\BIBforeignlanguage}[2]{{%
\expandafter\ifx\csname l@#1\endcsname\relax
\typeout{** WARNING: IEEEtran.bst: No hyphenation pattern has been}%
\typeout{** loaded for the language `#1'. Using the pattern for}%
\typeout{** the default language instead.}%
\else
\language=\csname l@#1\endcsname
\fi
#2}}
\providecommand{\BIBdecl}{\relax}
\BIBdecl

\bibitem{Liaskos2018mag}
C.~{Liaskos}, S.~{Nie}, A.~{Tsioliaridou}, A.~{Pitsillides}, S.~{Ioannidis},
  and I.~{Akyildiz}, ``A new wireless communication paradigm through
  software-controlled metasurfaces,'' \emph{IEEE Commun. Mag.}, vol.~56, no.~9,
  pp. 162--169, Sep. 2018.

\bibitem{Renzo2019}
{M. Di Renzo \emph{et al.}}, ``Smart radio environments empowered by {AI}
  reconfigurable meta-surfaces: {An} idea whose time has come,'' \emph{EURASIP
  J. Wireless Commun. Netw.}, May 2019.

\bibitem{Basar2019}
E.~{Basar}, M.~{Di Renzo}, J.~{De Rosny}, M.~{Debbah}, M.~{Alouini}, and
  R.~{Zhang}, ``Wireless communications through reconfigurable intelligent
  surfaces,'' \emph{IEEE Access}, vol.~7, pp. 116\,753--116\,773, Aug. 2019.

\bibitem{WuTutorial2020}
Q.~Wu, S.~Zhang, B.~Zheng, C.~You, and R.~Zhang, ``Intelligent reflecting
  surface aided wireless communications: {A} tutorial,'' \emph{arXiv preprint
  arXiv:2007.02759}, 2020.

\bibitem{Wu2019Magazine}
Q.~{Wu} and R.~{Zhang}, ``Towards smart and reconfigurable environment:
  {I}ntelligent reflecting surface aided wireless network,'' \emph{IEEE Commun.
  Mag.}, vol.~58, no.~1, pp. 106--112, Jan. 2020.

\bibitem{QwuTWC2019}
Q.~Wu and R.~Zhang, ``Intelligent reflecting surface enhanced wireless network
  via joint active and passive beamforming,'' \emph{IEEE Trans. Wireless
  Commun.}, vol.~18, no.~11, pp. 5394--5409, Nov. 2019.

\bibitem{Huang2019}
C.~{Huang}, A.~{Zappone}, G.~C. {Alexandropoulos}, M.~{Debbah}, and C.~{Yuen},
  ``Reconfigurable intelligent surfaces for energy efficiency in wireless
  communication,'' \emph{IEEE Trans. Wireless Commun.}, vol.~18, no.~8, pp.
  4157--4170, Aug. 2019.

\bibitem{zhao2019intelligent}
{M. M. Zhao}, Q.~Wu, {M. J. Zhao}, and R.~Zhang, ``Intelligent reflecting
  surface enhanced wireless network: {T}wo-timescale beamforming
  optimization,'' \emph{IEEE Trans. Wireless Commun.}, DOI:
  10.1109/TWC.2020.3022297, 2020.

\bibitem{Jiang2019}
T.~{Jiang} and Y.~{Shi}, ``Over-the-air computation via intelligent reflecting
  surfaces,'' in \emph{Proc. IEEE Global Communications Conference (GLOBECOM)},
  Dec. 2019, pp. 1--6.

\bibitem{Yang2019}
Y.~{Yang}, B.~{Zheng}, S.~{Zhang}, and R.~{Zhang}, ``Intelligent reflecting
  surface meets {OFDM:} {P}rotocol design and rate maximization,'' \emph{IEEE
  Trans. Commun.}, vol.~68, no.~7, pp. 4522--4535, Jul. 2020.

\bibitem{zuo2020resource}
J.~Zuo, Y.~Liu, Z.~Qin, and N.~Al-Dhahir, ``Resource allocation in intelligent
  reflecting surface assisted {NOMA} systems,'' \emph{arXiv preprint
  arXiv:2002.01765}, 2020.

\bibitem{Sabharwal2014}
A.~{Sabharwal}, P.~{Schniter}, D.~{Guo}, D.~W. {Bliss}, S.~{Rangarajan}, and
  R.~{Wichman}, ``In-band full-duplex wireless: {Challenges} and
  opportunities,'' \emph{IEEE J. Sel. Areas Commun.}, vol.~32, no.~9, pp.
  1637--1652, Sep. 2014.

\bibitem{Zheng2015}
G.~{Zheng}, ``Joint beamforming optimization and power control for full-duplex
  {MIMO} two-way relay channel,'' \emph{IEEE Trans. Signal Process.}, vol.~63,
  no.~3, pp. 555--566, Feb. 2015.

\bibitem{Riihonen2011}
T.~{Riihonen}, S.~{Werner}, and R.~{Wichman}, ``Mitigation of loopback
  self-interference in full-duplex {MIMO} relays,'' \emph{IEEE Trans. Signal
  Process.}, vol.~59, no.~12, pp. 5983--5993, Dec. 2011.

\bibitem{Everett2014}
E.~{Everett}, A.~{Sahai}, and A.~{Sabharwal}, ``Passive self-interference
  suppression for full-duplex infrastructure nodes,'' \emph{IEEE Trans.
  Wireless Commun.}, vol.~13, no.~2, pp. 680--694, Feb. 2014.

\bibitem{Zhang2015Mag}
Z.~{Zhang}, X.~{Chai}, K.~{Long}, A.~V. {Vasilakos}, and L.~{Hanzo}, ``Full
  duplex techniques for {5G} networks: {Self-interference} cancellation,
  protocol design, and relay selection,'' \emph{IEEE Commun. Mag.}, vol.~53,
  no.~5, pp. 128--137, May 2015.

\bibitem{Dinesh2013}
D.~Bharadia, E.~McMilin, and S.~Katti, ``Full duplex radios,'' in
  \emph{Proceedings of the ACM SIGCOMM}, 2013, p. 375–386.

\bibitem{Xin2015}
Y.~{Xin}, M.~{Ma}, Z.~{Zhao}, and B.~{Jiao}, ``Co-channel interference
  suppression techniques for full duplex cellular system,'' \emph{China
  Communications}, vol.~12, pp. 18--27, Dec. 2015.

\bibitem{Park2019CCI}
E.~{Park}, J.~{Bae}, H.~{Ju}, and Y.~{Han}, ``Resource allocation for
  full-duplex systems with imperfect co-channel interference estimation,''
  \emph{IEEE Trans. Wireless Commun.}, vol.~18, no.~4, pp. 2388--2400, Apr.
  2019.

\bibitem{Zuo2020CL}
J.~{Zuo}, Y.~{Liu}, E.~{Basar}, and O.~A. {Dobre}, ``Intelligent reflecting
  surface enhanced millimeter-wave {NOMA} systems,'' \emph{IEEE Commun. Lett.},
  DOI: 10.1109/LCOMM.2020.3009158, 2020.

\bibitem{Wang2020TVT}
P.~{Wang}, J.~{Fang}, X.~{Yuan}, Z.~{Chen}, and H.~{Li}, ``Intelligent
  reflecting surface-assisted millimeter wave communications: {Joint} active
  and passive precoding design,'' \emph{IEEE Trans. Veh. Technol.}, DOI:
  10.1109/TVT.2020.3031657, 2020.

\bibitem{Cui2019}
M.~{Cui}, G.~{Zhang}, and R.~{Zhang}, ``Secure wireless communication via
  intelligent reflecting surface,'' \emph{IEEE Wireless Commun. Lett.}, vol.~8,
  no.~5, pp. 1410--1414, Oct. 2019.

\bibitem{Hong2020TCOM}
S.~{Hong}, C.~{Pan}, H.~{Ren}, K.~{Wang}, and A.~{Nallanathan},
  ``Artificial-noise-aided secure {MIMO} wireless communications via
  intelligent reflecting surface,'' \emph{IEEE Trans. Commun.}, DOI:
  10.1109/TCOMM.2020.3024621, 2020.

\bibitem{wu2019joint}
Q.~Wu and R.~Zhang, ``Joint active and passive beamforming optimization for
  intelligent reflecting surface assisted {SWIPT} under {QoS} constraints,''
  \emph{IEEE J. Sel. Areas Commun.}, vol.~38, no.~8, pp. 1735--1748, Aug. 2020.

\bibitem{PanJSAC2020}
C.~{Pan}, H.~{Ren}, K.~{Wang}, M.~{Elkashlan}, A.~{Nallanathan}, J.~{Wang}, and
  L.~{Hanzo}, ``Intelligent reflecting surface aided {MIMO} broadcasting for
  simultaneous wireless information and power transfer,'' \emph{IEEE J. Sel.
  Areas Commun.}, vol.~38, no.~8, pp. 1719--1734, Aug. 2020.

\bibitem{Zhang2020CL}
Y.~{Zhang}, C.~{Zhong}, Z.~{Zhang}, and W.~{Lu}, ``Sum rate optimization for
  two way communications with intelligent reflecting surface,'' \emph{IEEE
  Commun. Lett.}, vol.~24, no.~5, pp. 1090--1094, May 2020.

\bibitem{Shen2020full-duplex}
H.~Shen, T.~Ding, W.~Xu, and C.~Zhao, ``Beamforming design with fast
  convergence for {IRS}-aided full-duplex communication,'' \emph{arXiv preprint
  arXiv:2008.00448}, 2020.

\bibitem{Xu2020full-duplex}
D.~Xu, X.~Yu, Y.~Sun, D.~W.~K. Ng, and R.~Schober, ``Resource allocation for
  {IRS}-assisted full-duplex cognitive radio systems,'' \emph{arXiv preprint
  arXiv:2003.07467}, 2020.

\bibitem{Han2020WCL}
Y.~{Han}, S.~{Zhang}, L.~{Duan}, and R.~{Zhang}, ``Cooperative double-{IRS}
  aided communication: {Beamforming} design and power scaling,'' \emph{IEEE
  Wireless Commun. Lett.}, vol.~9, no.~8, pp. 1206--1210, Aug. 2020.

\bibitem{Zhang2020deploy}
S.~Zhang and R.~Zhang, ``Intelligent reflecting surface aided multi-user
  communication: {Capacity} region and deployment strategy,'' \emph{arXiv
  preprint arXiv:2009.02324}, 2020.

\bibitem{zheng2019intelligent}
B.~{Zheng} and R.~{Zhang}, ``Intelligent reflecting surface-enhanced {OFDM}:
  {C}hannel estimation and reflection optimization,'' \emph{IEEE Wireless
  Commun. Lett.}, vol.~9, no.~4, pp. 518--522, Apr. 2020.

\bibitem{He2019_CE}
Z.~{He} and X.~{Yuan}, ``Cascaded channel estimation for large intelligent
  metasurface assisted massive {MIMO},'' \emph{IEEE Wireless Commun. Lett.},
  vol.~9, no.~2, pp. 210--214, Feb. 2020.

\bibitem{Haija2016TCOM}
A.~{Abu Al Haija} and C.~{Tellambura}, ``Decoding delay and outage performance
  analysis of full-duplex decode-forward relaying: {Backward} or sliding window
  decoding,'' \emph{IEEE Trans. Commun.}, vol.~64, no.~11, pp. 4520--4533, Nov.
  2016.

\bibitem{Cai2020FD}
Y.~{Cai}, K.~{Xu}, A.~{Liu}, M.~{Zhao}, B.~{Champagne}, and L.~{Hanzo},
  ``Two-timescale hybrid analog-digital beamforming for mmwave full-duplex
  {MIMO} multiple-relay aided systems,'' \emph{IEEE J. Sel. Areas Commun.},
  vol.~38, no.~9, pp. 2086--2103, Sep. 2020.

\bibitem{Bertsekas1999}
D.~Bertsekas, \emph{Nonlinear Programming}.\hskip 1em plus 0.5em minus
  0.4em\relax 2nd ed. Belmont, MA: Athena Scientific, 1999.

\bibitem{Shen2018TSP}
K.~{Shen} and W.~{Yu}, ``Fractional programming for communication
  systems—part {I}: {Power} control and beamforming,'' \emph{IEEE Trans.
  Signal Process.}, vol.~66, no.~10, pp. 2616--2630, May 2018.

\bibitem{Zhao2020intelligent}
M.~M. Zhao, Q.~Wu, M.~J. Zhao, and R.~Zhang, ``Exploiting amplitude control in
  intelligent reflecting surface aided wireless communication with imperfect
  {CSI},'' \emph{arXiv preprint arXiv:2005.07002}, 2020.

\bibitem{Probabilitydis}
H.~Cram\'er, \emph{Random Variables and Probability Distributions}.\hskip 1em
  plus 0.5em minus 0.4em\relax Cambridge, U.K.: Cambridge Univ. Press, 2004.

\end{thebibliography}

\end{document}